\documentclass[11pt]{article}

\usepackage{amsfonts,amsmath,amssymb}
\usepackage{amsthm}
\usepackage{bm}
\usepackage{xspace}
\usepackage{times}
\usepackage{verbatim}
\usepackage[margin=1in]{geometry}
\usepackage{algpseudocode}
\usepackage[dvipsnames,usenames]{xcolor}


\newtheorem{theorem}{Theorem}
\newtheorem{lemma}[theorem]{Lemma}
\newtheorem{corollary}[theorem]{Corollary}
\newtheorem{remark}{Remark}[theorem]
\newtheorem*{conjecture}{Conjecture}

\newtheorem{definition}{Definition}


\newcommand{\calD}{\ensuremath{\mathcal D}\xspace}
\newcommand{\calQ}{\ensuremath{\mathcal Q}\xspace}
\newcommand{\calX}{\ensuremath{\mathcal X}\xspace}
\newcommand{\calY}{\ensuremath{\mathcal Y}\xspace}

\newcommand{\R}{\ensuremath{\mathbb{R}}\xspace}
\newcommand{\N}{\ensuremath{\mathbb{N}}\xspace}

\newcommand{\eps}{\ensuremath{\epsilon}\xspace}

\newcommand{\mecpure}{\ensuremath{M_\eps}\xspace}
\newcommand{\mecapprox}{\ensuremath{M_{\eps, \delta}}\xspace}

\newcommand{\meclonepure}{\ensuremath{M^{\ell_1}_\eps}\xspace}
\newcommand{\mecloneapprox}{\ensuremath{M^{\ell_1}_{\eps, \delta}}\xspace}



\newcommand{\query}{\calQ}
\newcommand{\database}{\calD}
\newcommand{\metric}{\calX}
\newcommand{\metricy}{\calY}
\newcommand{\dist}{d}
\newcommand{\hamming}[2]{\|#1 - #2\|_H}

\newcommand{\funci}{F^{(i)}_\database}


\newcommand{\bound}[1]{{{#1}}}

\newif\iffullversion
\fullversiontrue

\begin{document}

\title{Exploiting Metric Structure for Efficient Private Query Release}


\author{Zhiyi Huang\thanks{Computer and Information Science, University of Pennsylvania. Email: {\tt hzhiyi@cis.upenn.edu}.}
\and
Aaron Roth\thanks{Computer and Information Science, University of Pennsylvania. Email: {\tt aaroth@cis.upenn.edu}.}}


\maketitle

\begin{abstract}
We consider the problem of privately answering queries defined on databases which are collections of points belonging to some metric space. We give simple, computationally efficient algorithms for answering \emph{distance queries} defined over an arbitrary metric. Distance queries are specified by points in the metric space, and ask for the average distance from the query point to the points contained in the database, according to the specified metric. Our algorithms run efficiently in the database size and the dimension of the space, and operate in both the online query release setting, and the offline setting in which they must in polynomial time generate a fixed data structure which can answer \emph{all} queries of interest. This represents one of the first subclasses of linear queries for which \emph{efficient} algorithms are known for the private query release problem, circumventing known hardness results for generic linear queries.
\end{abstract}

\section{Introduction}
Consider an online retailer who is attempting to recommend products to customers as they arrive. The retailer may have a great deal of demographic information about each customer, both from cookies and from data obtained from tracking networks. Moreover, the retailer will also have information about what other, demographically similar customers have purchased in the past. If the retailer can identify which cluster of customers the new arrival most resembles, then it can likely provide a useful set of recommendations. Note that this problem reduces to computing the average \emph{distance} from the new arrival to past customers in each demographic cluster, where the distance metric may be complex and domain specific.\footnote{Note that the most natural metric for this problem may not be defined by an $\ell_p$ norm, but may be something more combinatorial, like edit distance on various categorical features.}

For legal reasons (i.e. to adhere to it's stated privacy policy), or for public relations reasons, the retailer may not want the recommendations given to some customer $i$ to reveal information about any specific past customer $j \neq i$. Therefore, it would be helpful if the retailer could compute these distance queries while guaranteeing that these computations satisfy \emph{differential privacy}. Informally, this means that the distances computed from each new customer to the demographic clusters should be insensitive in the data of any single user in the database of past customers.

Distance queries are a subclass of \emph{linear queries}, which are well studied in the differential privacy literature \cite{BLR08,DNRRV09,DRV10,RR10,HR10}. For example, the data analyst could answer $k$ such queries from an $\ell$-dimensional metric space, on a database of size $n$ using the private multiplicative weights mechanism of Hardt and Rothblum \cite{HR10} with error that scales as $O(\textrm{poly}(\log(k),\ell)/\sqrt{n})$.\footnote{All of the mechanisms for answering linear queries \cite{BLR08,DNRRV09,RR10,DRV10,RR10,HR10,GHRU11,GRU12} are defined over \emph{discrete} domains $X$ and have an error dependence on $\log |X|$. In contrast, these queries are defined over \emph{continuous} $\ell$-dimensional domains, and so it is not clear that this previous work even applies. However, metric queries are Lipschitz, and so these mechanisms can be run on a discrete grid with roughly $n^{\Omega(\ell)}$ points, giving a polynomial dependence on $\ell$ in the error bounds, but an exponential dependence on $\ell$ in the running time.} However, none of these mechanisms is computationally efficient, and even for the best of these mechanisms, the running time \emph{per query} will be exponential in $\ell$, the dimension of the space. What's more, there is strong evidence that there do not exist computationally efficient mechanisms that can usefully and privately answer more than $O(n^2)$ general linear queries \cite{DNRRV09,Ull12}. A major open question in differential privacy is to determine whether there exist interesting subclasses of linear queries for which efficient algorithms do exist.

In this paper, we show that distance queries using an arbitrary metric are one such class. We give simple, efficient algorithms for answering exponentially many distance queries defined over any metric space with bounded diameter. In the online query release setting, our algorithms run in time nearly linear in the dimension of the space and the size of the private database per query. Our algorithms remain efficient even in the \emph{offline} query release setting, in which the mechanism must in one shot (and with only polynomial running time) privately generate a synopsis which can answer all of the (possibly exponentially many) queries of interest. This represents one of the first high dimensional classes of linear queries which are known to have \emph{computationally efficient} private query release mechanisms which can answer large numbers of queries.

\subsection{Our Techniques}
At a high level, our mechanism is based on the reduction from \emph{online learning algorithms} to \emph{private query release mechanisms} developed in a series of papers \cite{RR10,HR10,GHRU11,GRU12}. Specifically, we use the fact that an online mistake-bounded learning algorithm for learning the function $F:C\rightarrow \mathbb{R}$, which maps queries $f \in C$ to their answers $f(D)$ on the private database $D$ generically gives the existence of a private query release mechanism in the interactive setting, where the running time per query is equal to the update time of the learning algorithm.

We observe that when the queries are \emph{metric distance} queries over some continuous $\ell_p$ metric space \metric, then $F:\metric\rightarrow\mathbb{R}$ is a convex, Lipschitz-continuous function. Motivated by this observation, we give a simple mistake-bounded learning algorithm for learning arbitrary convex Lipschitz-continuous functions over the unit interval $[0,1]$ by approximating $F$ by successively finer piecewise linear approximations. Our algorithm has a natural generalization to the $\ell$-dimensional rectangle $[0,1]^\ell$, but unfortunately the mistake bound of this generalization necessarily grows exponentially with $\ell$.

Instead, we observe that if $\metric = [0,1]^\ell$, and is endowed with the $\ell_1$ metric, then $F$ can be decomposed into $\ell$ $1$-dimensional functions $F_1,\ldots,F_\ell$ each defined only over the unit interval $[0,1]$. Hence, for the $\ell_1$ metric, our learning algorithm can be extended to $[0,1]^\ell$ with only a linear increase in the mistake bound. In other words, \emph{the $\ell_1$ metric is an easy metric for differential privacy}. In fact, for $\ell_1$ distance queries, our algorithm achieves per-query error $O(\textrm{poly}(\log(k),\ell)/n^{4/5})$, improving on the worst-case error guarantees that would be given by inefficient generic query release mechanisms like \cite{BLR08,HR10}.

Finally, we show that our algorithm can be used to answer distance queries for any metric space that can be embedded into poly$(\ell)$-dimensional $\ell_1$ space using a \emph{low sensitivity embedding}. A sensitivity $s$ embedding is one that maps any pair of databases that differ in only $1$ element into a pair of projected databases that differ in only $s$ entries. Oblivious embeddings, such as the almost-isometric embedding from $\ell_2$ into $\ell_1$ are 1-sensitive \cite{FLM77, I06}. On the other hand, generic embeddings, such as the embedding from an arbitrary metric space into $\ell_1$ that follows from Bourgain's theorem can have sensitivity as high as $n$ \cite{B85, LLR95}.

We observe, however, that for our purposes, we do not require that the embedding preserve distances between pairs of database points, or between pairs of query points, but rather only between database points and query points. Therefore, we are able to prove a variant of Bourgain's theorem, which only preserves distances between query points and database points. This gives a $1$-sensitive embedding from \emph{any} metric space into $\log k$ dimensional $\ell_1$ space, with distortion $\log k$, which works for any collection of $k$ distance queries. In particular, this gives us an efficient offline algorithm for answering $k$ distance queries defined over an \emph{arbitrary} bounded diameter metric that has multiplicative error $O(\log k)$ and additive error $O(\textrm{polylog}(k)/n^{4/5})$. Our use of metric embeddings is novel in the context of differential privacy, and we believe that they will be useful tools for developing efficient algorithms in the future as we identify other privacy-friendly metrics in addition to $\ell_1$.
\subsection{Related Work}
Differential privacy was developed in a series of papers \cite{DN03,BDMN05,DMNS06}, culminating in the definition by Dwork, Mcsherry, Nissim, and Smith \cite{DMNS06}. It is accompanied by a vast literature which we do not attempt to survey.

Dwork et al.~\cite{DMNS06} also introduced the \emph{Laplace} mechanism, which together with the composition theorems of Dwork, Rothblum, and Vadhan \cite{DRV10} gives an efficient, interactive method for privately answering nearly $n^2$ arbitrary low-sensitivity queries on a database of size $n$ to non-trivial accuracy. On the other hand, it has been known since Blum, Ligett, and Roth \cite{BLR08} that it is information theoretically possible to privately answer nearly exponentially many \emph{linear} queries to non-trivial accuracy, but the mechanism of \cite{BLR08} is not computationally efficient. A series of papers \cite{BLR08,DNRRV09,DRV10,RR10,HR10,GHRU11,GRU12} has extended the work of \cite{BLR08}, improving its accuracy, running time, and generality. The state of the art is the private multiplicative weights mechanism of Hardt and Rothblum \cite{HR10}. However, even this mechanism has running time that is linear in the size of the \emph{data universe}, or in other words \emph{exponential} in the dimension of the data. Finding algorithms which can achieve error bounds similar to \cite{BLR08,HR10} while running in time only polynomial in the size of the database and the data dimension has been a major open question in the differential privacy literature since at least \cite{BLR08}, who explicitly ask this question.

Unfortunately, a striking recent result of Ullman \cite{Ull12}, building on the beautiful work of Dwork, Naor, Reingold, Rothblum, and Vadhan \cite{DNRRV09}, shows that assuming the existence of one way functions, no polynomial time algorithm can answer more than $O(n^2)$ arbitrary linear queries. In other words, the Laplace mechanism of \cite{DMNS06} is nearly optimal among all computationally efficient algorithms for privately answering queries at a comparable level of generality. This result suggests that to make progress on the problem of computationally efficient private query release, we must abandon the goal of designing mechanisms which can answer \emph{arbitrary} linear queries, and instead focus on classes of queries that have some particular structure that we can exploit.

Before this work, there were very few efficient algorithms for privately releasing classes of ``high dimensional'' linear queries with worst case error guarantees. Blum, Ligett, and Roth \cite{BLR08} gave efficient algorithms for two low dimensional classes of queries: constant dimensional axis aligned rectangles, and large margin halfspaces\footnote{Note that halfspace queries are in general high dimensional, but the large-margin assumption implies that the data has intrinsic dimension only roughly $O(\log n)$, since the dimensionality of the data can be reduced using the Johnson-Lindenstrauss lemma without affecting the value of any of the halfspace predicates.}. Feldman et al. gave efficient algorithms for releasing Euclidean $k$-medians queries in a constant dimensional unit ball \cite{FFKN09}. Note that when we restrict our attention to Euclidean metric spaces, our queries correspond to $1$-median queries. In contrast to \cite{FFKN09}, we can handle arbitrary metrics, and our algorithms are efficient also in the dimension of the metric space. Blum and Roth \cite{BR11} gave an efficient algorithm for releasing linear queries defined over predicates with extremely sparse truth tables, but such queries are very rare. Only slightly more is known for \emph{average case} error. Gupta et al.~\cite{GHRU11} gave a polynomial time algorithm for releasing the answers (to linear, but non-trivial error) to conjunctions, where the error is measured in the average case on conjunctions drawn from a product distribution. Hardt, Rothblum, and Servedio \cite{HRS12} gave a polynomial time algorithm for releasing answers to parity queries, where the error is measured in the average case on parities drawn from a product distribution. Although it is known how to convert average case error to worst-case error using the private boosting technique of Dwork, Rothblum, and Vadhan \cite{DRV10}, the boosting algorithm itself is not computationally efficient when the class of queries is large, and so cannot be applied in this setting where we are interested in polynomial time algorithms. For the special case of privately releasing conjunctions in $\ell$ dimensions, Thaler, Ullman, and Vadhan \cite{TUV12}, building on the work of Hardt, Rothblum, and Servedio \cite{HRS12}, give an algorithm that runs in time $O(2^{\sqrt{\ell}})$, improving on the generic bound of $O(2^\ell)$. Finding a polynomial time algorithm for releasing conjunctions remains an open problem.

Metric embeddings have proven to be a useful technique in theoretical computer science, particularly when designing approximation algorithms. See \cite{PiotrSurvey} for a useful survey. The specific embeddings that we use in this paper are the nearly isometric embedding from $\ell_2$ into $\ell_1$ using random projections  \cite{FLM77, I06}, and a variant of Bourgain's theorem \cite{B85, LLR95}, which allows the embedding of an \emph{arbitrary} metric into $\ell_1$. Our use of metric embeddings is slightly different than its typical use in approximation algorithms. Typically, metric embeddings are used to embed some problem into a metric in which some optimization problem of interest is tractable. In our case, we are embedding metrics into $\ell_1$, for which the \emph{information theoretic} problem of query release is simpler, since a $d$ dimensional $\ell_1$ metric can be decomposed into $d$ $1$-dimensional metric spaces. On the one hand, for privacy, we have a stronger constraint on the type of metric embeddings we can employ: we require them to be \emph{low sensitivity embeddings}, which map neighboring databases to databases of bounded distance (in the hamming metric). The embedding corresponding to Bourgain's theorem does not satisfy this property. On the other hand, we do not require that the embedding preserve the distances between pairs of database points, or pairs of query points, but merely between query points and database points. This allows us to prove a variant of Bourgain's theorem that is $1$-sensitive. We think that metric embeddings may prove to be a useful tool in the design of efficient private query release algorithms, and in particular, identifying other privacy friendly metrics, and the study of other low sensitivity embeddings is a very interesting future direction.  

\section{Preliminaries}

\subsection{Model}

Let $(\metric, \dist)$ be an arbitrary metric space. Let $\database \in \metric^{n}$ be a database consists of $n$ points in the metric space. For the sake of presentation, we will focus on metric spaces with diameter $1$ through out the main body of this paper. This is simply a matter of scaling: all of our error bounds hold for arbitrary diameter spaces, with a linear dependence on the diameter. 

We will consider the problem of releasing distance queries while preserving the privacy of the elements in the database, where each query is a point $y \in \metric$ in the metric space and the answer for a given query $y$ is the average distance from $y$ to the elements in the database, i.e., $\sum_{x \in \database} \frac{1}{n}\dist(x, y)$. Let $\query \in \metric^{k}$ be the set of distance queries asked by the data analyst. We will let $\database(\query) \in \R^k$ denote the exact answer to the queries $\query$ with respect to database $\database$. We will usually use $x_i$'s to denote data points and $y_j$'s to denote query points.

\iffullversion
\paragraph{Query Release Mechanisms}
\else
\medskip
\noindent{\bf Query Release Mechanisms~} 
\fi
We will consider two settings for query release in this paper:
The first setting is the {\em interactive setting}, where the queries are not given upfront but instead arrive online. An {\em interactive query release mechanism} needs to provide an answer for each query as it arrives. The answer can depend on the query, the private database, and the state of the mechanism, but not on future queries. An interactive query release mechanism is said to be efficient if the \emph{per-query} running time is polynomial in the database size $n$ and the dimension of the metric space $\ell$.

The second setting is the {\em non-interactive setting}. A {\em non-interactive query release mechanism} takes the database as input and outputs an algorithm that can answer {\em all} queries without further access to the database. We say an offline query release mechanism is efficient if both the running time of the mechanism and the running time per query of the algorithm it constructs are polynomial in $n$ and $\ell$.

\subsection{Differential Privacy}

We let $\hamming{\database_1}{\database_2}$ denote the hamming distance between two databases $\database_1$ and $\database_2$. Two databases are adjacent if the hamming distance between them is at most $1$ (i.e. they differ in a single element). We will write $n = |\database|$ to denote the size of the database. We will consider the by now standard privacy solution concept of ``differential privacy'' \cite{DMNS06}.

\begin{definition}[$(\eps, \delta)$-Differential Privacy]
  A mechanism $M$ is {\em $(\eps, \delta)$-differentially private} if for all adjacent databases $\database_1$ and $\database_2$, any  set of queries $\query$, and for all subsets of possible answers $S \subset \R^k$, we have
  $$\Pr\left[M(\database_1, \calQ) \in S\right] \le \exp(\eps) \Pr\left[M(\database_2, \calQ) \in S\right] + \delta \enspace.$$
  If $\delta = 0$, then we say that $M$ is $\eps$-differentially private.
\end{definition}

 A function $f:\metric^n\rightarrow \mathbb{R}$ is said to have sensitivity $\Delta$ with respect to the private database if $\max_{\database_1,\database_2}|f(\database_1) - f(\database_2)| \leq \Delta$, where the max is taken over all pairs of adjacent databases.

When we talk about the privacy of interactive mechanisms, the range of the mechanism is considered to be the entire transcript of queries and answers communicated between the data analyst and the mechanism (see \cite{DRV10,HR10} for a more precise formalization of the model). An interactive mechanism is $(\eps, \delta)$-differential private if the probability that the transcript falls into any chosen subset differs by at most an $\exp(\eps)$ multiplicative factor and a $\delta$ additive factor for any two adjacent databases.


Given a mechanism, we will measure its accuracy in terms of answering distance queries as follows.

\begin{definition}[Accuracy]
  A mechanism $M$ is {\em $(\alpha, \beta)$-accurate} if for any database $\database$ and any set of queries $\query$, with probability at least $1-\beta$, the mechanism answers every query up to an additive error $\alpha$, i.e.,
  $$\Pr\left[\|M(\database, \query) - \database(\query)\|_{\infty} \le \alpha \right] \ge 1-\beta \enspace.$$
\end{definition} 

\section{Releasing $\ell_1$-Distance Queries}

In this section, we consider $\ell_1$ distance queries, i.e., we let $\metric \subset [0, 1]^\ell$ and $d = \|.\|_1$ such that the diameter of $\metric$ (with respect to~$\ell_1$) is $1$. We present private, computationally efficient mechanisms for releasing the answers to $\ell_1$ distance queries in both the interactive and offline setting. These mechanisms for releasing $\ell_1$ distances will serve as important building blocks for our results for other metrics. First, let us formally state our result in the interactive setting:

\begin{theorem} \label{thm:l1-interactive}
  There is an interactive $(\eps, \delta)$-differentially private mechanism for releasing answers to distance queries with respect to~$(\metric, \|.\|_1)$ that is $(\alpha, \beta)$-accurate with $\alpha$ satisfying
  $$\alpha = O \left( \frac{\ell^{9/5} \log^{4/5} (4 / \delta) \log^{4/5} (k / \beta)}{n^{4/5} \epsilon^{4/5}} \right) \enspace.$$
  There is also an interactive $\epsilon$-differentially private mechanism for releasing distance queries with respect to~$(\metric, \|.\|_1)$ that is $(\alpha, \beta)$-accurate for $\alpha$ satisfying
  $$\alpha = O \left( \frac{\ell^{7/3} \log^{2/3} (k / \beta)}{n^{2/3} \eps^{2/3}} \right) \enspace.$$
  The per-query running times of both mechanisms is $O(\ell n)$ per query.
\end{theorem}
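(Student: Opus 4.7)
The plan is to combine two ingredients: (i) the generic reduction from mistake-bounded online learning to interactive differentially private query release developed in \cite{RR10,HR10,GHRU11}, and (ii) a tailored mistake-bounded learner that exploits the additive structure of the $\ell_1$ metric together with convexity.

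I would begin with the coordinate decomposition. For any $y \in [0,1]^\ell$, one has
$$F_\database(y) \;=\; \frac{1}{n}\sum_{x \in \database} \|x-y\|_1 \;=\; \sum_{i=1}^{\ell} \funci(y_i), \qquad \funci(t)=\frac{1}{n}\sum_{x\in\database}|x_i-t|.$$
Each $\funci \colon [0,1] \to [0,1]$ is convex and $1$-Lipschitz, so it suffices to build a single-coordinate mistake-bounded learner for convex $1$-Lipschitz functions on $[0,1]$ and run $\ell$ parallel copies. I would implement this one-dimensional learner by maintaining a piecewise linear hypothesis over a progressively refined set of breakpoints: when the reduction signals a mistake at a point $t$, the learner inserts a new breakpoint at $t$ with the corrected (noisy) value returned by the outer mechanism. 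Because the target is $1$-Lipschitz, once the inserted breakpoints are $O(\alpha)$-dense the hypothesis is within $\alpha$ of the truth, which gives a polynomial-in-$1/\alpha$ per-coordinate mistake bound; summing over coordinates and budgeting accuracy across them yields a total mistake bound $M$ that is polynomial in $\ell$ and $1/\alpha$.

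Next I would feed this $\ell$-dimensional learner into the reduction. On each incoming query $y$, the mechanism computes the hypothesis answer $\widetilde F(y)$ in $O(\ell)$ time, then uses a sparse-vector-style test to privately decide whether $|\widetilde F(y)-F_\database(y)|\le T$ for a chosen threshold $T$. If the test passes, output $\widetilde F(y)$; otherwise, output a Laplace-noised version of $F_\database(y)$ (which takes $O(\ell n)$ time to compute exactly, and has sensitivity $1/n$) and use this value to update the learner. The privacy accounting charges $M$ ``update'' events; for the approximate-DP bound I would invoke the advanced composition theorem of \cite{DRV10}, so each update uses privacy $\eps_0 = \Theta(\eps/\sqrt{M\log(1/\delta)})$, while for the pure-DP bound I would use basic composition with $\eps_0 = \eps/M$.

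Setting the parameters reduces to balancing the threshold $T$ (which upper bounds the learner's $\ell_\infty$ error after it stops making mistakes) against the Laplace noise of scale $O(\log(k/\beta)/(n\eps_0))$ added at each update. Solving the resulting fixed-point equation with $M$ expressed in terms of $T$ via the mistake bound yields the two rates $n^{-4/5}$ and $n^{-2/3}$ claimed in the theorem, with the stated exponents on $\ell$ arising from how the per-query accuracy budget is distributed across the $\ell$ coordinate summands. The running-time claim is then immediate from the per-query sparse-vector check and the $O(\ell n)$ cost of evaluating $F_\database$ on a mistake. The step I expect to be the main obstacle is pinning down the precise mistake bound of the convex-Lipschitz learner---making the piecewise-linear refinement argument work with the right dependence on $1/\alpha$---and then carefully tracking the $\ell$-dependence through the coordinate decomposition so that the final exponents on $\ell$ and $1/\alpha$ match the stated bounds after composition.
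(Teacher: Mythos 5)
Your high-level plan is the same as the paper's: decompose $F_\database(y)=\sum_i\funci(y_i)$ into $\ell$ one-dimensional convex $1$-Lipschitz functions, build a mistake-bounded learner for each coordinate, and feed it into the online-learning-to-private-release reduction (the paper packages this as Lemma~\ref{lem:idc}) with advanced composition for the $(\eps,\delta)$ rate and basic composition for the pure $\eps$ rate. That part of your sketch is correct, and you have correctly identified the crux: the mistake bound of the one-dimensional learner and its dependence on $1/\alpha$.

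The gap is exactly there, and it is load-bearing. The argument you actually write down --- ``once the inserted breakpoints are $O(\alpha)$-dense the hypothesis is within $\alpha$ of the truth'' --- uses only the Lipschitz property and gives a per-coordinate mistake bound of $O(1/\alpha_1)$. Plugging $m(\alpha)=O(\ell/\alpha_1)=O(\ell^2/\alpha)$ into the reduction and solving the fixed point yields $\alpha = \tilde O(\ell^{5/3}/(n\eps)^{2/3})$ for $(\eps,\delta)$-DP and $\alpha=\tilde O(\ell^2/\sqrt{n\eps})$ for pure DP, i.e.\ the rates $n^{-2/3}$ and $n^{-1/2}$ rather than the claimed $n^{-4/5}$ and $n^{-2/3}$. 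To reach the theorem's rates you need the mistake bound to improve to $O(1/\sqrt{\alpha_1})$ per coordinate, and that improvement is precisely where convexity must be used; density of breakpoints alone cannot deliver it.

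The paper gets $O(1/\sqrt{\alpha_1})$ by changing two things relative to your sketch. First, the hypothesis is not a piecewise-linear interpolation through sampled values but the \emph{maximum of tangent lines}: on a mistake at $x^*_t$ the learner adds the affine function $a_t x + b_t$ with $a_t \approx G'(x^*_t)$, $b_t \approx G(x^*_t)-G'(x^*_t)x^*_t$. This requires the learner to query (noisy) \emph{derivatives} as well as values, which is why the paper extends the iterative-database-construction definition to permit a set $S$ of $O(1/n)$-sensitive auxiliary functions (here $|S|=2\ell$: a value and a derivative per coordinate). The derivative $(\funci)'(t)=\frac{1}{n}\sum_{x}\mathrm{sign}(t-x_i)$ is $O(1/n)$-sensitive, so this is privacy-compatible. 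Second, the mistake-bound proof exploits convexity directly: if the tangent added at $x^*_t$ is off by more than $\alpha_1$ at a later mistake point $x^*_{t'}$, convexity gives $\bigl(G'(x^*_{t'})-G'(x^*_t)\bigr)\bigl(x^*_{t'}-x^*_t\bigr) > \alpha_1$. Sorting all mistake points, summing over consecutive pairs, and applying Cauchy--Schwarz against the bounds $\sum(\hat x_{t+1}-\hat x_t)\le 1$ and $\sum(G'(\hat x_{t+1})-G'(\hat x_t))\le 2$ shows the number of mistakes is at most $O(1/\sqrt{\alpha_1})$. Noise tolerance is absorbed by targeting error $\alpha_1/2$ and allowing the oracle error $\alpha_1/4$ (hence the error tolerance $c=1/(4\ell)$ in the paper's Lemma~\ref{lem:learning3}). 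Your secant/interpolation variant avoids derivative queries, but the clean Cauchy--Schwarz charging scheme does not transfer directly because the interval endpoints at the time of a split need not remain adjacent among the final breakpoints; if you want to keep that design you would need a different potential argument, and you would in any case still need to use convexity somewhere to beat $O(1/\alpha_1)$.
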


As an extension of the above theorem, we also get the following result in the offline setting.

\begin{theorem} \label{thm:l1-offline}
  There is a poly-time $(\eps, \delta)$-differentially private offline mechanism that is $(\alpha, \beta)$-accurate for releasing distance queries with respect to~$(\metric, \|.\|_1)$, for $\alpha$ satisfying
  $$\alpha = O \left( \frac{\ell^{9/5} \log^{4/5} (4 / \delta) \log^{4/5} (n \ell / \beta)}{n^{4/5} \epsilon^{4/5}} \right) \enspace.$$
  There is a poly-time $\epsilon$-differentially private offline mechanism that is $(\alpha, \beta)$-accurate for releasing distance queries with respect to~$(\metric, \|.\|_1)$, for $\alpha$ satisfying
  $$\alpha = O \left( \frac{\ell^{7/3} \log^{2/3} (n \ell / \beta)}{n^{2/3} \eps^{2/3}} \right) \enspace.$$
  The total running time of this mechanism is $O\left(\frac{\ell^3 n^2}{\alpha}\right)$.
\end{theorem}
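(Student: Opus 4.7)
The plan is to reduce the offline release problem to the interactive mechanism of Theorem~\ref{thm:l1-interactive} by identifying a polynomial-sized set of queries whose answers suffice to characterize $F$ everywhere on $\metric$. The starting point is the decomposition that already underlies the interactive algorithm: because $d = \|\cdot\|_1$, the query function factors as $F(y)=\frac{1}{n}\sum_i\|x_i-y\|_1 = \sum_{j=1}^\ell F_j(y_j)$, where each $F_j(t)=\frac{1}{n}\sum_i |x_{i,j}-t|$ is convex and $1$-Lipschitz on $[0,1]$. Hence it suffices to produce, for every coordinate $j$, an approximation $\tilde F_j$ to the one-dimensional function $F_j$, and to combine them coordinatewise.

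To produce the $\tilde F_j$, I would discretize $[0,1]$ into a uniform grid of $O(\ell/\alpha)$ points, so that adjacent points sit within $\alpha/(2\ell)$ of each other. For each coordinate $j$ and each grid point $t$, form an $\ell_1$ distance query $y^{(j,t)}$ that isolates the contribution of $F_j(t)$ --- concretely, $y^{(j,t)}$ agrees with a fixed base point in all coordinates except the $j$th, which is set to $t$. This yields $k = O(\ell^2/\alpha)$ queries in total, which are fed through the interactive mechanism. The internal hypothesis maintained by the online learner --- piecewise-linear $1$-Lipschitz functions $\tilde F_j$ along each coordinate --- is then published as the offline synopsis, and any new query $y^*$ is answered by evaluating $\sum_j \tilde F_j(y^*_j)$.

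For the accuracy analysis, the interactive mechanism guarantees per-query error $\alpha_0 = O(\ell^{9/5}\log^{4/5}(4/\delta)\log^{4/5}(k/\beta)/(n^{4/5}\eps^{4/5}))$; substituting $k=O(\ell^2/\alpha)$ and using the trivial lower bound $\alpha=\Omega(1/n)$ gives $\log k = O(\log(n\ell))$, which matches the stated bound. The $1$-Lipschitz continuity of both $F$ and the piecewise-linear hypothesis ensures that accuracy on the $O(\ell/\alpha)$-grid extends to all of $\metric$ with only an additional $\alpha/2$ of discretization error. The per-query running time $O(\ell n)$ of Theorem~\ref{thm:l1-interactive}, multiplied by the $O(\ell^2/\alpha)$ queries and by an extra factor of $n$ from maintaining the piecewise linear hypothesis, accounts for the claimed $O(\ell^3 n^2/\alpha)$ total running time. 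Privacy is inherited directly from Theorem~\ref{thm:l1-interactive} since the offline mechanism only accesses the database through its interactive subroutine.

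The delicate step is keeping the dimension dependence at $\ell^{9/5}$ rather than the naive $\ell^{14/5}$ one would obtain by insisting on per-coordinate accuracy $\alpha/\ell$ and summing across the $\ell$ terms. The resolution has to come from treating the hypothesis globally: since $\tilde F$ is itself a sum of $1$-Lipschitz one-dimensional pieces, the interactive mechanism's per-query guarantee on the $y^{(j,t)}$ should translate into a global error bound on $\tilde F$ that scales like the interactive error, rather than into independent pointwise errors at each grid point that would accumulate across coordinates. Making this global claim precise, and checking that the basis queries $y^{(j,t)}$ interact cleanly with the online learner's internal representation, is the main technical hurdle I expect.
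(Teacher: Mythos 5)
Your grid-plus-Lipschitz skeleton is the same as the paper's: discretize each coordinate to a grid of granularity $\Theta(\alpha/\ell)$, run the interactive mechanism on $O(\ell^2/\alpha)$ fixed queries, publish the resulting piecewise-linear hypothesis, and use the $1$-Lipschitz property of each $F^{(i)}$ together with $\alpha = \Omega(1/n)$ to bound $\log k = O(\log(n\ell))$. The paper uses the grid $T = \{0, \alpha'/\ell, 2\alpha'/\ell, \ldots, 1\}$ with $\alpha' = \alpha/2$ and instantiates the online mechanism with $k = 2\ell^2/\alpha$, matching your count.

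However, the worry you flag at the end --- that routing the $y^{(j,t)}$ queries through the full $\ell$-dimensional interactive mechanism might cost an extra factor of $\ell$ --- is not a ``delicate step that needs a global argument''; it is a genuine failure of the route you propose. The interactive mechanism of Theorem~\ref{thm:l1-interactive} only triggers an update when the \emph{total} error $|\hat F(y^{(j,t)}) - F(y^{(j,t)})| > \alpha$. Because the one-dimensional pieces never overshoot (each $\hat F^{(i)}$ is a max of near-tangent lines to the convex $F^{(i)}$, so $\hat F^{(i)}(t) \le F^{(i)}(t) + O(\alpha/\ell)$), negative per-coordinate errors as large as $\Theta(\alpha)$ at the probed coordinate can be masked by small per-coordinate errors elsewhere, and no mistake is ever registered. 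After all $O(\ell^2/\alpha)$ queries finish, each $\hat F^{(j)}$ is only guaranteed to satisfy $F^{(j)}(t) - O(\alpha) \le \hat F^{(j)}(t) \le F^{(j)}(t) + O(\alpha/\ell)$ on $T$; summing across coordinates at a point off the one-dimensional slices gives error $\Theta(\ell\alpha)$, exactly the $\ell^{14/5}$ you fear. There is no ``global'' cancellation to rescue this: the counterexample is simply $\ell$ coordinates each simultaneously underestimated by $\Theta(\alpha)$ at some $t$.

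The paper's argument avoids this by asking the grid queries \emph{directly to each of the $\ell$ one-dimensional learners}, each of which is run with per-coordinate tolerance $\alpha_1 = \alpha/\ell$ (this is already how the iterative database construction of Lemma~\ref{lem:learning3} is structured). When a one-dimensional learner is queried at $t$ and does not make a mistake, its error at $t$ is at most $\alpha_1$; because adding tangent lines to the max only tightens a convex underestimate, this per-coordinate guarantee persists for the rest of the run. The mistake count is $\sum_i O(1/\sqrt{\alpha_1}) = O(\ell^{3/2}/\alpha^{1/2})$, identical to Lemma~\ref{lem:learning3}, so Lemma~\ref{lem:idc} gives the same accuracy and privacy bill. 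The correction to your write-up is therefore not an additional lemma establishing a global bound, but replacing the mistake condition: track mistakes at the per-coordinate threshold $\alpha/\ell$ rather than the aggregate threshold $\alpha$, and then per-coordinate accuracy at the grid, one-dimensional Lipschitz extension, and summation over coordinates give overall accuracy $O(\alpha)$ without any extra $\ell$.
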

\begin{remark}Note that the offline mechanism has no dependence on the number of queries asked, in either the accuracy or the running time. It in one shot produces a data structure that can be used to accurately answer \emph{all} $\ell_1$ queries.
\end{remark}



\iffullversion
\paragraph{Proof Overview}
\else
\medskip
\noindent{\bf Proof Overview~} 
\fi
To prove Theorem \ref{thm:l1-interactive}, we will use the connection between private query release and online learning, which was established in \cite{RR10, HR10, GRU12, JT12}. We will briefly review this connection in Section \ref{sec:learning}. Based on this connection, it suffices to provide an online learning algorithm that learns the function mapping queries to their answers with respect to~the database using a small number of updates. Next, we will shift our viewpoint by interpreting each database as a $1$-Lipschitz and convex function that maps the query points to real values between $[0, 1]$. The structure of the $\ell_1$ metric allows us to reduce the problem to learning $\ell$ different one dimensional $1$-Lipschitz and convex functions, for which we propose in Section \ref{sec:convex} an online learning algorithm that only requires $O(\alpha^{-1/2})$ updates to achieve an additive error bound $\alpha$. Finally, we combine these ingredients to give an interactive differentially private mechanism for releasing answers for $\ell_1$ distance queries in Section \ref{sec:proofl1} and complete the proof of Theorem \ref{thm:l1-interactive}. Roughly speaking, the interactive mechanism will always maintain a hypothesis function that maps queries to answers and it will update the hypothesis function using the online learning algorithm whenever the hypothesis function makes a mistake. Finally, we show that there is an explicit set of $O(\ell^2 / \alpha)$ queries such that asking these queries to the interactive mechanism is sufficient to guarantee that the hypothesis function is accurate with respect to~all queries. So Theorem \ref{thm:l1-offline} follows because the offline mechanism can first ask these queries to the interactive mechanism and then release the hypothesis function.


\iffullversion
\subsection{Query Release from Iterative Database Construction} \label{sec:learning}
\else
\subsection{Query Release from Iterative Database\\ Construction} \label{sec:learning}
\fi

In this section, we give a (variant) of the definition of the iterative construction framework defined in \cite{GRU12}, generalizing the median mechanism and the multiplicative weights mechanism \cite{RR10,HR10}. Let $F_C:C\rightarrow \mathbb{R}$ be the function such that for each $y \in C$, $F_C(y) = \database(y)$: i.e. $F$ maps queries to their answers evaluated on $\database$. Note that $F_C(y)$ is a $1/n$ sensitive function in the private database. The variant of the definition of Iterative Database Construction that we give allows the learning algorithm to also learn the answers to some set $S$ of $O(1/n)$ sensitive functions on $F_C$ as well (in addition to just $F_C(y)$). In our application, $S$ will consist of queries about the \emph{derivative} of $F_C$, where in our case, $F_C$ will be a (one-sided) differentiable function.

\begin{definition}[\cite{RR10,HR10,GRU12}]
  Let $F_C:C\rightarrow \mathbb{R}$ be the function such that for each $y \in C$, $F_C(y) = \database(y)$: i.e. $F$ maps queries to their answers evaluated on $\database$. Let $S = \{f_1,\ldots,f_{|S|}\}$ be a collection of functions $f_i:\metric\times \database\rightarrow \mathbb{R}$ that are each $O(1/n)$ sensitive in the private database $\database$. Given an error bound $\alpha > 0$ and an error tolerance $c$, an {\em iterative database construction algorithm} using functions $S$ with respect to a class of queries $C$ plays the following game with an adversary: 
  \begin{enumerate}
    \setlength{\partopsep}{0pt}
    \setlength{\topsep}{0pt}
    \setlength{\parsep}{0pt}
    \setlength{\itemsep}{0pt}
    \item The algorithm maintains a hypothesis function $\hat{F_t}:C\rightarrow \mathbb{R}$ on which it can evaluate queries, which is initialized to be some default function $\hat{F_0}$ at step $0$.
    \item In each step $t \ge 1$, the adversary (adaptively) chooses a query $y_t \in \metric$, at which point the algorithm predicts a query value $\hat{F_t}(y_t)$. If $|\hat{F_t}(y_t) - F_C(y_t)| > \alpha$, then we say the algorithm has {\em made a mistake}. At this point, the algorithm receives $|S|$ values $a_1,\ldots,a_{|S|} \in \mathbb{R}$ such that for all $i$: $a_i \in [f_i(y_t,\database) - c\alpha, f_i(y_t,\database) + c\alpha]$. The algorithm may update its hypothesis function using this information.
  \end{enumerate}
\end{definition}


\begin{definition}[Mistake Bound]
  An iterative database construction algorithm has a {\em mistake bound} $m : \R_+ \mapsto \N_+$, if for any given error bound $\alpha$, no adversary can (adaptively) choose a sequence of queries to force the algorithm to make $m(\alpha)+1$ mistakes.
\end{definition}

\begin{lemma}[\cite{RR10,HR10,GRU12}] \label{lem:idc}
  If there is an iterative database construction using functions $S$ for releasing a query class $C$ with mistake bound $m(\alpha)$ with respect to some error tolerance $c$, then there is an $(\epsilon, \delta)$-differentially private mechanism in the interactive setting that is $(\alpha, \beta)$-accurate for answering queries $C$, for $\alpha$ satisfying
  $$c\alpha = \frac{1}{n \eps} 3000 \sqrt{|S|m(\alpha)} \log(4 / \delta) \log(k / \beta) \enspace.$$
  There is also an $\epsilon$-differentially private mechanism in the interactive setting that is $(\alpha, \beta)$-accurate, for $\alpha$ satisfying
  $$c\alpha = \frac{1}{n\eps} 3000 ~ |S|\cdot m(\alpha) \log(k / \beta) \enspace.$$
  Moreover, the per-query running time of the query release mechanism is equal to (up to constant factors) the running time of the per-round running time of the iterative database construction algorithm.
\end{lemma}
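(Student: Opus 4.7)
The plan is to build the private mechanism by wrapping the iterative database construction algorithm inside a ``lazy'' outer loop that only consumes privacy budget when a mistake occurs. Maintain the hypothesis $\hat{F}_t$ produced by the algorithm, initialized to $\hat{F}_0$. On each incoming query $y_t$, use the sparse-vector / AboveThreshold subroutine to privately test whether $|F_C(y_t) - \hat{F}_t(y_t)| > \alpha/2$. If the test fails, output $\hat{F}_t(y_t)$---no additional privacy cost is paid beyond the sparse-vector token. If the test succeeds (a candidate mistake), release noisy versions $a_1,\ldots,a_{|S|}$ of $f_1(y_t,\database),\ldots,f_{|S|}(y_t,\database)$ by adding Laplace noise for the pure-DP case, or noise calibrated for advanced composition for the approximate-DP case; feed these $a_i$'s into the iterative database construction algorithm to update the hypothesis, and output the updated prediction $\hat{F}_{t+1}(y_t)$.

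Privacy splits into two pieces. The sparse vector detects mistakes while paying privacy cost proportional only to the number of above-threshold events rather than the total number of queries. Since the iterative construction admits at most $m(\alpha)$ mistakes, the sparse vector fires at most $m(\alpha)$ times (assuming the good event of correct mistake detection, justified below). Each firing triggers the release of $|S|$ values, each of sensitivity $O(1/n)$, for a total of $|S|\cdot m(\alpha)$ noisy releases over the life of the mechanism. For $\epsilon$-DP, basic composition forces per-release noise of magnitude $O(|S|\cdot m(\alpha)/(\epsilon n))$; for $(\epsilon,\delta)$-DP, advanced composition only charges $O(\sqrt{|S|\cdot m(\alpha)\log(1/\delta)}/(\epsilon n))$ per release. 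Setting the per-release noise equal (up to constants) to $c\alpha$ and solving the resulting (implicit in $m(\alpha)$) relation yields the two stated expressions for $\alpha$.

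Accuracy follows from a union bound. Calibrate the noise scale so that with probability at least $1-\beta$ all $O(k + |S|m(\alpha))$ noise draws lie below $c\alpha/4$; the factor $\log(k/\beta)$ in the stated bound is exactly this union-bound overhead. Condition on this good event. Then (i) every genuine mistake $|\hat{F}_t(y_t)-F_C(y_t)|>\alpha$ crosses the sparse-vector threshold, (ii) any spurious firing still corresponds to a query at which $|\hat{F}_t(y_t)-F_C(y_t)|>\alpha/4$, which we may safely charge as a mistake in the bookkeeping of the iterative construction algorithm, and (iii) the returned values satisfy $|a_i - f_i(y_t,\database)|\le c\alpha$, matching the premise of the mistake bound. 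Hence the total number of firings is at most $m(\alpha)$, the privacy accounting above is consistent, and on every query the mechanism's output is within $\alpha$ of $F_C(y_t)$.

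The main obstacle is the privacy argument for sparse vector composed with the update releases in the adaptive, interactive setting: one must track that the sparse vector's per-event privacy cost, together with the per-update cost, composes to $(\epsilon,\delta)$ overall, and that the adversary's future queries (and thus the sensitivities seen by subsequent $f_i$'s) can depend on past outputs without breaking the argument. This is handled by the transcript-based privacy analysis of interactive mechanisms as in \cite{DRV10,HR10}, together with post-processing for the hypothesis $\hat{F}_t$, which is a deterministic function of the noisy $a_i$'s. Relative to \cite{RR10,HR10,GRU12} the only new ingredient is allowing the auxiliary set $S$ of $O(1/n)$-sensitive side functions, but this is purely an accounting change: $|S|\cdot m(\alpha)$ replaces $m(\alpha)$ in the composition theorem everywhere, exactly as reflected in the stated bounds.
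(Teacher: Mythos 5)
Your proposal follows essentially the same route as the sources cited for the lemma (\cite{RR10,HR10,GRU12}): run the iterative database construction algorithm inside a sparse-vector/AboveThreshold wrapper, release the $|S|$ noisy update values only when a mistake is detected, and account for privacy via basic or advanced composition over the $O(|S|\cdot m(\alpha))$ noisy releases. The paper does not reprove this lemma; it cites it, and the one genuine variation it introduces relative to prior work---allowing the learner to query a set $S$ of $O(1/n)$-sensitive side functions (here, derivatives) rather than only the query value itself---is, as you correctly observe, a pure accounting change that replaces $m(\alpha)$ by $|S|\cdot m(\alpha)$ in the composition. One place your bookkeeping is slightly loose: you let the sparse vector fire whenever the noisy gap exceeds $\alpha/2$, and then claim a spurious firing (true gap as small as $\alpha/4$) ``may safely be charged as a mistake.'' Strictly speaking, the mistake bound $m(\alpha)$ only counts rounds where the true gap exceeds $\alpha$; to make the charge legitimate you need to run the iterative construction with a tighter internal tolerance (say $\alpha/4$) and invoke $m(\alpha/4)$, so that every sparse-vector firing is an actual mistake at that tolerance. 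Since $m$ appears on both sides of the implicit equation for $\alpha$ and the lemma hides a generous constant ($3000$), this rescaling is absorbed and does not change the stated bounds, but it is the detail that makes the ``charge to the mistake bound'' step airtight.
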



\iffullversion
\paragraph{Representing $\ell_1$ Databases as Decomposable Convex Functions}
\else
\medskip
\noindent{\bf Representing $\ell_1$ Databases as Decomposable Convex Functions~} 
\fi
Consider a database $\database$ where the universe is the $\ell$-dimensional unit cube $\metric = [0,1]^\ell$ endowed with the $\ell_1$ metric. In this setting, the function mapping queries $y \in \metric$ to their answers takes the form: $F_{\database}(y) = \frac{1}{n}\sum_{x \in \database}||x-y||_1$, which is a $1/n$-Lipschitz convex function of $y$. We wish to proceed by providing an iterative database construction for $\ell_1$ distance queries using these properties. Observe that because we are working with the $\ell_1$ metric, we can write: $F_{\database}(y) = \sum_{i=1}^\ell F^{(i)}_{\database}(y)$, where $F^{(i)}_{\database}(y) = \frac{1}{n}\sum_{x \in \database}|x_i-y_i|$. Observe that each function $F^{(i)}_{\database}(y)$ is $1$-Lipschitz and convex, and has a 1-dimensional range $[0,1]$. Therefore, to learn an approximation to $F_{\database}(y)$ up to some error $\alpha$, it suffices to learn an approximation to each $F^{(i)}_{\database}(y)$ to error $\alpha/\ell$. This is the approach we take.

\subsection{Learning 1-Lipschitz Convex Functions} \label{sec:convex}

In this section we study the problem of iteratively constructing an arbitrary continuous, $1$-Lipschitz, and convex function $G : [0, 1] \mapsto [0, 1]$ up to some additive error $\alpha_1$ with noisy oracle access to the function. Here, the oracle can return the function value $G(x)$ and the derivative $G'(x)$ given any $x \in [0, 1]$ up to an additive error of $\alpha_1/4$. Here, we assume the derivative $G'$ is well defined in $[0, 1]$: If $G$ is not differentiable at $x$, then we assume the derivative $G'(x)$ is (consistently) defined to be any value between the left and right derivatives at $x$.


We will first present an algorithm that learns any one-dimension $1$-Lipschitz and convex function using an exact oracle. Then, we will explain why this algorithm is in fact noise-tolerant. Finally, we show that this result naturally extends to multi-dimensional decomposable functions.

\iffullversion
\paragraph{Learning 1-D Functions with an Accurate Oracle}
\else
\medskip
\noindent{\bf Learning 1-D Functions with an Accurate Oracle~} 
\fi
We will consider maintaining a hypothesis piece-wise linear function $\hat{G}(x)$ via the algorithm given in Figure \ref{fig:1}. We will analyze the number of updates needed by this algorithm before it has learned a piece-wise linear function $\hat{G}$ that approximates $G$ everywhere up to additive error $\alpha_1$.

\begin{figure}
  \centering
  \fbox{
  \iffullversion
  \begin{minipage}{.98\textwidth}
  \else
  \begin{minipage}{.46\textwidth}
  \fi
    {\bf Learning a $1$-Lipschitz and Convex Function}

    \medskip

    Maintain $\hat{G}(x) = \max_k \{ a_k \cdot x + b_k\}$ where $a_k \in [-1, 1]$ and $b_k \in \R$ define a set of linear functions.

    \medskip

    \noindent{\bf Initial Step:~} Let $a_0 = 0$ and $b_0 = 0$.

    \medskip

    \noindent{\bf Update Step $t \ge 1$:~} While the update generator returns a distinguishing point $x^*_t$, we shall add the tangent line at $x^*_t$ with respect to function $g$ to the set of linear functions, i.e., $a_t = G'(x^*_t)$ and $b_t = G(x^*_t) - G(x^*_t)' \cdot x^*_t$.
  \end{minipage}}
  \caption{An algorithm for learning a $1$-Lipschitz and convex one-dimensional function by approximating it with a piece-wise linear function. The algorithm always predicts according to $\hat{G}$. When it makes a mistake, it is given an update point $x^*_t$ together with $G(x^*_t)$ and $G(x^*_t)'$.} \label{fig:1}
\end{figure}

First, for any $1$-Lipschitz (possibly non-convex) function $G$ and any given error bound $\alpha_1 > 0$, the algorithm in Figure \ref{fig:1} will make at most $1/\alpha_1$ mistakes. This is because the function being $1$-Lipschitz implies that the tangent line at each update point $x^*_t$ is a good approximation (up to error $\alpha_1$) in the neighborhood $[x^*_t - \alpha_1, x^*_t + \alpha_1]$, and hence any pair of update points are at least $\alpha_1$ away from each other. Further, it is easy to construct examples where this bound is tight up to a constant.

Next, we will show that using the convexity of function $G$, we can improve the mistake bound to $O(\frac{1}{\sqrt{\alpha_1}})$.

\begin{lemma} \label{lem:learning}
  For any $1$-Lipschitz convex function $G$ and any given error bound $\alpha_1 \in (0, 1)$, the algorithm in Figure \ref{fig:1} will make at most $\frac{3}{\sqrt{\alpha_1}}$ updates.
\end{lemma}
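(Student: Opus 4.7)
My plan is to index the updates $t = 1, \ldots, T$ in time order and then reason about them in \emph{sorted spatial} order. Let $x_{(1)} < x_{(2)} < \cdots < x_{(T)}$ be the update points sorted by position, and let $s_{(j)} = G'(x_{(j)})$ be the associated slope. Since $G$ is convex, $G'$ is nondecreasing, so $s_{(1)} \le s_{(2)} \le \cdots \le s_{(T)}$. Each tangent line added to $\hat{G}$ is, by convexity, a lower bound on $G$, so the hypothesis $\hat{G}$ always satisfies $\hat{G} \le G$ pointwise, and $\hat{G}(x_{(j)}) = G(x_{(j)})$ once the tangent at $x_{(j)}$ has been added. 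Write $\Delta_i = x_{(i+1)} - x_{(i)}$ and $\delta_i = s_{(i+1)} - s_{(i)} \ge 0$ for $i = 1, \ldots, T-1$.

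The main step is to show that every adjacent pair in the sorted order must satisfy $\delta_i \Delta_i > \alpha_1$. Consider a pair $x_{(i)}, x_{(i+1)}$, and suppose (without loss of generality, after exchanging the two cases) that $x_{(i+1)}$ was added at some time $t$ strictly after $x_{(i)}$ was added. At the moment just before time $t$, the tangent at $x_{(i)}$ is already present in $\hat{G}$, so $\hat{G}(x_{(i+1)}) \ge G(x_{(i)}) + s_{(i)} \Delta_i$. The update at $x_{(i+1)}$ occurred because of a mistake, hence $G(x_{(i+1)}) - \hat{G}(x_{(i+1)}) > \alpha_1$, giving
\[
G(x_{(i+1)}) - G(x_{(i)}) - s_{(i)} \Delta_i \;>\; \alpha_1.
\]
Convexity of $G$ yields the complementary upper bound $G(x_{(i+1)}) - G(x_{(i)}) = \int_{x_{(i)}}^{x_{(i+1)}} G'(t)\,dt \le s_{(i+1)} \Delta_i$, and subtracting gives $\delta_i \Delta_i > \alpha_1$. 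The opposite time ordering is symmetric: one instead uses the tangent at $x_{(i+1)}$ as a lower bound when evaluating $\hat{G}$ at $x_{(i)}$, together with $G(x_{(i+1)}) - G(x_{(i)}) \ge s_{(i)} \Delta_i$, and the same conclusion follows.

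To finish, I would use the two global budgets $\sum_{i=1}^{T-1} \Delta_i = x_{(T)} - x_{(1)} \le 1$ (domain width) and $\sum_{i=1}^{T-1} \delta_i = s_{(T)} - s_{(1)} \le 2$ (slopes lie in $[-1,1]$ since $G$ is $1$-Lipschitz). Combining these with the per-interval inequality via Cauchy--Schwarz,
\[
(T-1)\sqrt{\alpha_1} \;<\; \sum_{i=1}^{T-1} \sqrt{\delta_i \Delta_i} \;\le\; \sqrt{\Bigl(\sum_i \delta_i\Bigr)\Bigl(\sum_i \Delta_i\Bigr)} \;\le\; \sqrt{2},
\]
so $T < 1 + \sqrt{2/\alpha_1} \le 3/\sqrt{\alpha_1}$ whenever $\alpha_1 \le 1$.

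The main obstacle I expect is cleanly arguing the per-pair bound $\delta_i \Delta_i > \alpha_1$, because a priori the mistake at $x_{(i+1)}$ is defined relative to the entire hypothesis $\hat{G}$ at the time of the update, not relative to the single tangent at $x_{(i)}$. The key observation that makes this go through is that $\hat{G}$ is a \emph{max} of tangents, so dropping down to any single previously-installed tangent only weakens $\hat{G}$; this lets me turn the global mistake condition into a purely local two-point inequality, after which convexity closes the gap. Everything else is just AM/GM-style bookkeeping.
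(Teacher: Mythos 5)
Your proof is correct and follows essentially the same route as the paper: reduce to a per-pair inequality $\delta_i \Delta_i > \alpha_1$ for spatially adjacent update points (using that the hypothesis is a max of tangent lines, each a lower bound on the convex $G$), then apply Cauchy--Schwarz against the two telescoping budgets $\sum_i \Delta_i \le 1$ and $\sum_i \delta_i \le 2$. You are somewhat more explicit than the paper on two points — why the global mistake condition can be localized to a single previously-installed tangent, and the two-way case analysis on the time order of spatially adjacent update points — but these are expository refinements of the same argument, not a different proof.
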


\begin{proof}
  Consider any two update points $x^*_t$ and $x^*_{t'}$. Let us assume w.l.o.g.~that $t < t'$. Then, by our assumption, the tangent line at $x^*_t$ does not approximate the function value of $f$ at $x^*_{t'}$ up to an additive error of $\alpha$. Therefore, we get that
  \begin{eqnarray}
    \alpha_1 & < & G(x^*_{t'}) - \left( G'(x^*_t) (x^*_{t'} - x^*_t) + G(x^*_t) \right) \notag \\
    & = & G(x^*_{t'}) - G(x^*_t) - G'(x^*_t) (x^*_{t'} - x^*_t) \notag \\
    & \le & G'(x^*_{t'})(x^*_{t'} - x^*_t) - G'(x^*_t) (x^*_{t'} - x^*_t) \notag \\
    & = & (G'(x^*_{t'}) - G'(x^*_t)) (x^*_{t'} - x^*_t) \enspace, \label{eq:1}
  \end{eqnarray}
  where the second inequality is by the convexity of $f$.

  Next, consider a maximal set of update points in sorted order: $-1 \le \hat{x}_1 < \dots < \hat{x}_T \le 1$. Since $G$ is convex and $1$-Lipschitz, we have that $-1 \le G'(\hat{x}_1) < \dots < G'(\hat{x}_T) \le 1$. Therefore, we get that
  \begin{eqnarray*}
    2 \cdot 1 & \ge & (G'(\hat{x}_T) - G'(\hat{x}_1))(\hat{x}_T - \hat{x}_1) \\
    & = & \textstyle \sum_{t=1}^{T-1} (G'(\hat{x}_{t+1}) - G'(\hat{x}_t)) \sum_{i=1}^{T-1} (\hat{x}_{t+1} - \hat{x}_t) \\
    & \ge & \textstyle \left( \sum_{t=1}^{T-1} \sqrt{(G'(\hat{x}_{t+1}) - G'(\hat{x}_t))(\hat{x}_{t+1} - \hat{x}_t)} \right)^2 \\
    & \ge & \textstyle \left( \sum_{t=1}^{T-1} \sqrt{\alpha_1} \right)^2 \enspace.
  \end{eqnarray*}
  Here, the first inequality is by $G'(x_t) \in [-1, 1]$ and $x_t \in [0, 1]$ for $t = 1, \dots, T$; the second inequality is a simple application of the Cauchy-Schwartz inequality; the last inequality is by equation \eqref{eq:1}. So by the above inequality, the number of mistakes is at most $T \le \frac{\sqrt{2}}{\sqrt{\alpha_1}} + 1 < \frac{3}{\sqrt{\alpha_1}}$.
\end{proof}

\iffullversion
\paragraph{Learning 1-D Functions with a Noisy Oracle}
\else
\medskip
\noindent{\bf Learning 1-D Functions with a Noisy Oracle~}
\fi
Note that the domain of the function is $[0, 1]$. So if the tangent line at $x'$ approximates the function value at $x$ up to additive error $\frac{\alpha_1}{2}$, i.e.,
$$G(x) - G(x') + G'(x') (x - x') \le \frac{\alpha_1}{2} \enspace,$$
then a noisy version of the tangent line $\bar{G}(x') + \bar{G}'(x') (x - x')$, where $\bar{G}(x') \in [G(x') - \frac{\alpha_1}{4}, G(x') + \frac{\alpha_1}{4}]$ and $\bar{G}'(x') \in [G'(x') - \frac{\alpha_1}{4}, G'(x') + \frac{\alpha_1}{4}]$, will approximate the value at $x$ up to additive error $\alpha_1$. Hence, the mistake bound of the algorithm in Figure \ref{fig:1} for learning a $1$-Lipschitz and convex function up to additive error $\alpha_1$ using a noisy oracle is no more than the mistake bound for learning the same function up to additive error $\frac{\alpha_1}{2}$ with an accurate oracle. Hence, the mistake bound is still of order $O(\frac{1}{\sqrt{\alpha_1}})$.

\begin{lemma} \label{lem:learning2}
  For any $1$-Lipschitz convex function $g$ and any given error bound $\alpha_1 \in (0, 1)$, the algorithm in Figure \ref{fig:1} will make at most $O(\frac{1}{\sqrt{\alpha_1}})$ updates with an $\frac{\alpha_1}{4}$-noisy oracle.
\end{lemma}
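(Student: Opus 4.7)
The plan is to reduce the noisy-oracle mistake bound to the accurate-oracle bound of Lemma \ref{lem:learning} by charging each noisy mistake at tolerance $\alpha_1$ to an accurate mistake at the tighter tolerance $\alpha_1/2$. The paragraph immediately preceding the lemma already sketches this idea; I would formalize it as follows.

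First, I would quantify the deviation between each noisy tangent line $\bar{T}_s(x) = \bar{G}(x^*_s) + \bar{G}'(x^*_s)(x - x^*_s)$ and the true tangent line $T_s(x) = G(x^*_s) + G'(x^*_s)(x - x^*_s)$. Since both the value and the derivative estimates lie in $\alpha_1/4$-balls around the truth, and $|x - x^*_s| \le 1$ on $[0,1]$, the triangle inequality gives $|\bar{T}_s(x) - T_s(x)| \le \alpha_1/4 + \alpha_1/4 = \alpha_1/2$ everywhere on $[0,1]$.

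Next, I would use convexity of $G$, which implies $T_s(x) \le G(x)$ for all $x$, to conclude that $\bar{T}_s(x) \le G(x) + \alpha_1/2$, and hence that the hypothesis $\hat{G}(x) = \max_s \bar{T}_s(x)$ also satisfies $\hat{G}(x) \le G(x) + \alpha_1/2$. So whenever the algorithm makes a mistake at step $t$, meaning $|\hat{G}(x^*_t) - G(x^*_t)| > \alpha_1$, the overshoot direction is ruled out and we must have $\hat{G}(x^*_t) < G(x^*_t) - \alpha_1$. In particular, every prior noisy tangent satisfies $\bar{T}_s(x^*_t) < G(x^*_t) - \alpha_1$, and combining this with $T_s(x^*_t) \le \bar{T}_s(x^*_t) + \alpha_1/2$ yields $T_s(x^*_t) < G(x^*_t) - \alpha_1/2$. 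This is exactly the condition that the true tangent line at $x^*_s$ fails to approximate $G$ at $x^*_t$ up to additive error $\alpha_1/2$.

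Hence the sequence of update points produced against the noisy oracle at tolerance $\alpha_1$ is also a valid sequence of update points for the algorithm of Figure \ref{fig:1} run with an accurate oracle at tolerance $\alpha_1/2$. Applying Lemma \ref{lem:learning} with parameter $\alpha_1/2$ then bounds the number of updates by $3/\sqrt{\alpha_1/2} = O(1/\sqrt{\alpha_1})$. The only subtlety, and the part I would verify most carefully, is that with noisy tangents $\hat{G}$ is no longer a pointwise lower bound on $G$; the argument above shows the overshoot is controlled by $\alpha_1/2$, so mistakes can only arise from undershoot, which is exactly where the reduction to Lemma \ref{lem:learning} applies.
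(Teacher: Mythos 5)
Your proof is correct and follows essentially the same reduction as the paper: both arguments show that a mistake against the $\frac{\alpha_1}{4}$-noisy oracle at tolerance $\alpha_1$ forces every previously stored \emph{true} tangent line to miss $G$ by more than $\alpha_1/2$, and then invoke Lemma~\ref{lem:learning} at tolerance $\alpha_1/2$. You additionally make explicit the point the paper's sketch leaves implicit — that noisy tangents can overshoot $G$ by at most $\alpha_1/2$, so every mistake is necessarily an undershoot and the one-sided argument of Lemma~\ref{lem:learning} applies verbatim.
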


\iffullversion
\paragraph{Learning Decomposable Functions}
\else
\medskip
\noindent{\bf Learning Decomposable Functions~}
\fi
Suppose we want to learn an $\ell$-dimension decomposable convex function $F_\database = \sum_{i=1}^\ell \funci$ up to additive error $\alpha$, where each $\funci$ is convex and $1$-Lipschitz. Then, it suffices to learn the $1$-Lipschitz convex functions $F^{(i)}_D$ for each coordinate up to error $\alpha_1 = \frac{\alpha}{\ell}$. So as a simple corollary of Lemma \ref{lem:learning2}, we have the following lemma:

\begin{lemma}
  \label{lem:learning3}
For any function $F_{\database}:[0,1]^\ell\rightarrow \mathbb{R}$ such that:
\begin{enumerate}
\item $F_{\database}(y) = \sum_{i=1}^{\ell}F_{\database}^{(i)}(y_i)$ where each $F_{\database}^{(i)}:[0,1]\rightarrow [0,1]$ is $1$-Lipschitz and convex, and:
\item For every $y \in [0,1]^\ell$ and every $i \in [\ell]$: $F_{\database}^{(i)}(y)$ and $(F_{\database}^{(i)}(y))'$ are $1/n$-sensitive in $\database$
\end{enumerate}
there is an iterative database construction algorithm for $F_{\database}$ using a collection of $2\ell$ functions $S$ with respect to an error tolerance $1/(4\ell)$ that has a mistake bound of $m(\alpha) = O(\ell^{3/2}/\alpha^{1/2})$.
\end{lemma}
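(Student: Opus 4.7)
The plan is to run $\ell$ independent copies of the one-dimensional algorithm from Figure~\ref{fig:1}, one for each coordinate $i \in [\ell]$, and combine their hypotheses additively. Let $\hat G^{(i)}$ denote the piecewise-linear hypothesis maintained by the $i$-th copy, and set the overall hypothesis to be $\hat F_t(y) = \sum_{i=1}^{\ell} \hat G^{(i)}(y_i)$. The set of $|S| = 2\ell$ sensitive functions we declare is $\{F^{(i)}_\database(y_i), (F^{(i)}_\database)'(y_i)\}_{i=1}^{\ell}$, each of which is $1/n$-sensitive in $\database$ by hypothesis~(2) of the lemma.

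With error tolerance $c = 1/(4\ell)$, the guarantee on the returned values $a_1,\dots,a_{2\ell}$ is that each approximates its function to within $c\alpha = \alpha/(4\ell)$. Setting $\alpha_1 := \alpha/\ell$, this is exactly an $\alpha_1/4$-noisy oracle for each coordinate function $F^{(i)}_\database$ and its derivative, which is the precision required by Lemma~\ref{lem:learning2}. On a round where the mechanism makes a mistake, i.e.\ $|\hat F_t(y_t) - F_\database(y_t)| > \alpha$, I would perform an update to every coordinate $i$ for which $|\hat G^{(i)}(y_{t,i}) - F^{(i)}_\database(y_{t,i})| > \alpha_1$, using the noisy tangent-line data provided by the oracle as in Figure~\ref{fig:1}.

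For the mistake bound, the key observation is the pigeonhole step: whenever the global hypothesis errs by more than $\alpha$, the triangle inequality
\[
\alpha < |\hat F_t(y_t) - F_\database(y_t)| \le \sum_{i=1}^{\ell} \bigl|\hat G^{(i)}(y_{t,i}) - F^{(i)}_\database(y_{t,i})\bigr|
\]
forces at least one coordinate to be off by more than $\alpha/\ell = \alpha_1$, so at least one coordinate copy receives a genuine update in the sense of Figure~\ref{fig:1}. By Lemma~\ref{lem:learning2}, each individual coordinate copy can undergo at most $O(1/\sqrt{\alpha_1}) = O(\sqrt{\ell/\alpha})$ updates before its $1$-Lipschitz convex target is learned to error $\alpha_1$ everywhere. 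Summing across the $\ell$ coordinates yields a total mistake bound of $\ell \cdot O(\sqrt{\ell/\alpha}) = O(\ell^{3/2}/\sqrt{\alpha})$, as claimed.

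I don't foresee a serious obstacle; the construction is a clean reduction and the only subtle point is confirming that a global mistake implies at least one coordinate-level update (handled by the triangle inequality above) and that the $c = 1/(4\ell)$ scaling of the error tolerance matches exactly the per-coordinate noise budget $\alpha_1/4$ demanded by Lemma~\ref{lem:learning2}. Both are straightforward, so the proof is essentially a bookkeeping step that composes the one-dimensional noisy-oracle learner with the $\ell_1$ decomposition.
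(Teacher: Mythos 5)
Your proof is correct and follows essentially the same argument as the paper: decompose into $\ell$ one-dimensional learners with per-coordinate error budget $\alpha_1 = \alpha/\ell$, match the tolerance $c = 1/(4\ell)$ to the $\alpha_1/4$-noisy oracle required by Lemma~\ref{lem:learning2}, and charge each global mistake via the triangle inequality to at least one coordinate-level mistake, giving $\ell \cdot O(1/\sqrt{\alpha_1}) = O(\ell^{3/2}/\sqrt{\alpha})$.
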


\begin{proof}
  Let $\alpha_1 = \frac{\alpha}{\ell}$. Consider the following algorithm:
  \begin{enumerate}
    \setlength{\partopsep}{0pt}
    \setlength{\topsep}{0pt}
    \setlength{\parsep}{0pt}
    \setlength{\itemsep}{0pt}
    \item The algorithm maintains a hypothesis function $\hat{F}_t = \sum_{i=1}^{\ell}\hat{F}_t^{(i)}$ by maintaining $\ell$ one-dimension piecewise-linear hypothesis functions $\hat{F}^{(i)}_t : [0, 1] \mapsto [0, 1]$ for each $i \in [\ell]$ via the one-dimension learning algorithm with error tolerance $\alpha_1$, and letting $\hat{F}_t = \sum_{i=1}^\ell \hat{F}^{(i)}_t$.
    \item If the algorithm makes a mistake on query $y_t \in [0, 1]^\ell$, then the algorithm asks query $y_t$ to each of the one-dimensional learning algorithms. On any of the one-dimensional learning algorithms $i$ on which a mistake is made, the algorithm queries two values: $F^{(i)}_t(y_{ti})$ and $(F^{(i)}_t)'(y_{ti})$, tolerating additive error up to $\alpha_1 / (4)$, and updates the hypothesis $\hat{F}^{(i)}_t$, $i = 1, \dots, \ell$, accordingly using the one dimensional learning algorithm. Note that this leads to at most $|S| = 2\ell$ queries per update.
  \end{enumerate}

  Note that whenever the above algorithm makes a mistake, at least one of the one-dimensional algorithms must also make a mistake (since otherwise the total error was at most $\ell\alpha_1 = \alpha$), and therefore we can charge this mistake to the mistake bound of at least one of the one-dimensional learning algorithms. By Lemma \ref{lem:learning2}, the number of times that the hypothesis function $\hat{F}^{(i)}_t$ in each coordinate admits additive error at least $\alpha_1$ is at most $O(1/\sqrt{\alpha_1})$. So the above iterative database construction algorithm has mistake bound $O(\ell/\sqrt{\alpha_1}) = O(\ell^{3/2}/\alpha^{1/2})$.
\end{proof}


\subsection{Proofs of Theorem \ref{thm:l1-interactive} and Theorem \ref{thm:l1-offline}} \label{sec:proofl1}

\iffullversion
\begin{proof}[Proof of Theorem \ref{thm:l1-interactive}]
\else
\begin{proof}[of Theorem \ref{thm:l1-interactive}]
\fi
  Since the $\ell_1$ distance function in each coordinate has range $[0, 1]$ and is $1$-Lipschitz, we get that $F^{(i)}_t$ and the derivative $(F^{(i)}_t)'$ are $O(1/n)$-sensitive. So by Lemma \ref{lem:learning3}, there is an iterative database construction algorithm for releasing answers to $\ell_1$ distance queries that uses a set $S$ of $2 \ell$ $O(1/n)$-sensitive queries with error $\alpha_1$, and the algorithm has mistake bound $O(\ell^{3/2}/\alpha^{1/2})$.

  By plugging the parameters of the above iterative database construction algorithm to Lemma \ref{lem:idc}, we get that there is an $(\epsilon, \delta)$-differentially private mechanism in the interactive setting that is $(\alpha, \beta)$-accurate for releasing distance queries with respect to~metric space $([0, 1]^\ell, \|.\|_1)$, for $\alpha$ satisfying
  $$\frac{\alpha}{\ell} = O\left(\frac{1}{n \eps} \sqrt{\frac{\ell^{5/2}}{\alpha^{1/2}}} \log(4 / \delta) \log(k / \beta)\right) \enspace.$$
  Solving the above we get that
  $$\alpha = O \left( \frac{\ell^{9/5} \log^{4/5} (4 / \delta) \log^{4/5} (k / \beta)}{n^{4/5} \epsilon^{4/5}} \right) \enspace.$$

  We also get that there is an $\epsilon$-differentially private mechanism in the interactive setting that is $(\alpha, \beta)$-accurate, for $\alpha$ satisfying
  $$\frac{\alpha}{\ell} = O \left( \frac{1}{n\eps} \frac{\ell^{5/2}}{\alpha^{1/2}} \log(k / \beta) \right)\enspace.$$
  Solving the above we get that
  $$\alpha = O \left( \frac{\ell^{7/3} \log^{2/3} (4 / \delta) \log^{2/3} (k / \beta)}{n^{2/3} \epsilon^{2/3}} \right) \enspace.$$

  The analysis of the running time per query is straightforward and hence omitted.
\end{proof}

\iffullversion
\begin{proof}[Proof of Theorem \ref{thm:l1-offline}]
\else
\begin{proof}[of Theorem \ref{thm:l1-offline}]
\fi
Consider running the online query release mechanism with accuracy $\alpha' = \alpha/2$. To give an offline mechanism, we simply describe a fixed set of $\ell/\alpha'$ queries that we can make to each of the $\ell$ one-dimensional learning algorithms maintaining $\hat{F}^{(i)}_{\database}$ that guarantees that for each $y \in [0,1]$, $|\hat{F}^{(i)}_{\database}(y) - F^{(i)}_{\database}(y)| \leq \alpha/\ell$ Once we have this condition, we know that for each $y \in [0,1]^\ell$: $|\hat{F}_{\database}(y) - F_{\database}(y)| \leq \alpha$. The queries are simple: we just take our query set to be a grid: $T = \{0, \alpha'/\ell, 2\alpha'/\ell, 3\alpha'/\ell, \ldots, 1\}$.By the guarantees of the $1$-dimensional learning algorithm, we have that for every $y \in T$, $|\hat{F}^{(i)}_{\database}(y) - F^{(i)}_{\database}(y)| \leq \alpha'/\ell$. Moreover, by the fact that $\hat{F}^{(i)}_{\database}$ is $1$-Lipschitz, and for every $y \in [0,1]$, $d(y,T) \leq \alpha'/\ell$, we have that for every $y \in [0,1]$, $|\hat{F}^{(i)}_{\database}(y) - F^{(i)}_{\database}(y)| \leq 2\alpha'/\ell = \alpha/\ell$, which is the condition we wanted. In total, we make $2\ell^2/\alpha$ queries, and the theorem follows by instantiating the guarantees of the online mechanism with $k = 2\ell^2/\alpha$.
\end{proof}

\iffullversion
\section{Releasing Arbitrary Distance Queries via 1-Sensitive Metric Embeddings}
\else
\section{Releasing Arbitrary Distance\\ Queries via 1-Sensitive Metric\\ Embeddings}
\fi
In this section, we will discuss how to release answers to distance queries with respect to other metric spaces. Our approach is to reduce the problem to releasing answers to $\ell_1$ distance queries via metric embeddings. Recall that an {\em embedding} from a metric space $(\metric, \dist)$ to another metric space $(\metricy, \dist')$ is a mapping $\pi : \metric \mapsto \metricy$. The usefulness of an embedding is measured by how much the embedding distorts the distance between any pair of points.


Note that for the purpose of answering distance queries, the usual definition of distortion is too strong in the sense that the usual notion of distortion considers the worst case distortion for every pair of points in the metric space while we only need to preserve the distances between every data-query pair. So in this paper, we will consider the following weaker notion of expansion, contraction, and distortion of metric embeddings.

\begin{definition}
  Recall that $(\metric, \dist)$ is the metric space of the distance query release problem and $\database$ and $\query$ are the set of data points and the set of query points respectively. The {\em expansion} of an embedding $\pi$ from $(\metric, \dist)$ to another metric space $(\metricy, \dist')$ is
  $$\max_{x \in \metric, y \in \query} \frac{\dist'(\pi(x), \pi(y))}{\dist(x, y)} \enspace.$$
  The {\em contraction} of the embedding is
  $$\max_{x \in \metric, y \in \query} \frac{\dist(x, y)}{\dist'(\pi(x), \pi(y))} \enspace.$$
  The {\em distortion} of the embedding is the product of its expansion and contraction.
\end{definition}

In the rest of this section, we will choose the target metric $(\metricy, \dist')$ to be the $\ell_1$ metric space $([0, 1]^\ell, \|.\|_1)$ and we will always scale the embedding such that the expansion is $1$.

\subsection{1-Sensitive Metric Embeddings}

Suppose we are given such an embedding from $(\metric, \dist)$ to $([0, 1]^{\ell}, \|.\|_1)$ with expansion $1$ and contraction $C$. In some cases, the dimension $\ell$ of the target $\ell_1$ space may depend on the contraction $C$. We will embed both the data points and the query points into $([0, 1]^{\ell}, \|.\|_1)$ and release distance queries via the the mechanism for $\ell_1$. Concretely, consider the mechanisms $\mecapprox$ and $\mecpure$ given in Figure \ref{fig:generic}.

\begin{figure}
  \centering
  \fbox{
  \iffullversion
  \begin{minipage}{.98\textwidth}
  \else
  \begin{minipage}{.46\textwidth}
  \fi
    {\bf Releasing distance queries via embedding into $\ell_1$}
    \medskip

    {\bf Input:~} A set of data points $\database$. A set of queries points $\query$. A 1-sensitive embedding $\pi$ from $(\metric, \dist)$ to $([0, 1]^\ell, \|\cdot\|_1)$.
    \begin{enumerate}
      \setlength{\partopsep}{0pt}
      \setlength{\topsep}{0pt}
      \setlength{\parsep}{0pt}
      \setlength{\itemsep}{0pt}
      \item Construct a {\em proxy database} $\database'$ for releasing $\ell_1$ distances by letting $\pi(x) \in \database'$ for every $x \in \database$.
      \item Use the $(\eps, \delta)$-differentially private mechanism (resp., $\eps$-differentially private mechanism) for releasing $\ell_1$ distances queries to answer $\frac{1}{n} \sum_{x \in \database} \|\pi(x) - \pi(y)\|_1$ for every $y \in \query$ and release them as the answers to $\frac{1}{n} \sum_{x \in \database} \dist(x, y)$ respectively.
    \end{enumerate}
  \end{minipage}}
  \caption{An $(\eps, \delta)$-differentially private mechanism $\mecapprox$ (resp., $\eps$-differentially private mechanism $\mecpure$) for releasing distance queries via embedding into $\ell_1$} \label{fig:generic}
\end{figure}

Let us first consider the accuracy of these mechanisms. The mechanisms will lose a multiplicative factor due to the embedding and an additive factor due to answering the $\ell_1$ queries privately. More precisely,

\begin{theorem} \label{thm:accuracy}
  If the embedding $\pi$ has expansion $1$ and contraction $C$, and if we use the $(\eps, \delta)$-differentially private mechanism to release answers for $\ell_1$ distance queries, then with probability at least $1 - \beta$ the mechanism $\mecapprox$ answers every distance query $y \in \query$ with accuracy
  $$\frac{1}{C} ~ \sum_{x \in \database} \dist(x, y) - \alpha_{\eps, \delta} \le \mecapprox (y, D) \le \sum_{x \in \database} \dist(x, y) + \alpha_{\eps, \delta} \enspace,$$
  where $\bound{\alpha_{\eps, \delta} = \tilde{O} \left( \frac{\ell^{9/5}}{n^{4/5} \eps^{4/5}} \right)}$. If we use the $\eps$-differentially private mechanism to release answers for $\ell_1$ distance queries, then with probability at least $1-\beta$ the mechanism $A_\eps$ answers every query $y \in \query$ with accuracy
  $$\frac{1}{C} \sum_{x \in \database} \dist(x, y) - \alpha_\eps \le A_\eps (\database, y) \le \sum_{x \in \database} \dist(x, y) + \alpha_\eps \enspace,$$
  where \bound{$\alpha_\eps = O \left( \frac{\ell^{7/3}}{n^{2/3} \eps^{2/3}} \right)$}.
\end{theorem}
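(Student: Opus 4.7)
The plan is to apply the accuracy guarantee of the $\ell_1$-distance mechanism (Theorem~\ref{thm:l1-interactive}) to the proxy instance $(\database', \pi(\query))$ constructed in Figure~\ref{fig:generic}, and then translate the accuracy back into the original metric using the two sides of the embedding's distortion bound. Since the proxy database lives in $([0,1]^\ell, \|\cdot\|_1)$, Theorem~\ref{thm:l1-interactive} says that, with probability at least $1-\beta$, for every query $y \in \query$ the returned value $\mecapprox(y,\database)$ is within $\alpha_{\eps,\delta} = \tilde O(\ell^{9/5}/(n^{4/5}\eps^{4/5}))$ of the true average $\ell_1$ distance on the proxy, i.e.
\[
\Bigl| \mecapprox(y,\database) - \tfrac{1}{n}\sum_{x \in \database}\|\pi(x) - \pi(y)\|_1\Bigr| \le \alpha_{\eps,\delta}.
\]
The analogous bound with $\alpha_\eps = O(\ell^{7/3}/(n^{2/3}\eps^{2/3}))$ holds in the pure-$\eps$ case.

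Next I would use the embedding bounds. Because $\pi$ has expansion $1$ and contraction $C$, each data-query pair $(x,y)$ with $x \in \database,\, y \in \query$ satisfies
\[
\tfrac{1}{C}\, \dist(x,y) \;\le\; \|\pi(x) - \pi(y)\|_1 \;\le\; \dist(x,y).
\]
Averaging both inequalities over $x \in \database$ (the average is taken over the same set of indices on both sides, so the inequalities are preserved term-by-term) gives
\[
\tfrac{1}{C} \cdot \tfrac{1}{n}\sum_{x \in \database}\dist(x,y) \;\le\; \tfrac{1}{n}\sum_{x \in \database}\|\pi(x) - \pi(y)\|_1 \;\le\; \tfrac{1}{n}\sum_{x \in \database}\dist(x,y).
\]
Finally I would chain the two displays: the lower accuracy bound from Theorem~\ref{thm:l1-interactive} combined with the upper embedding bound yields $\mecapprox(y,\database) \le \tfrac{1}{n}\sum_x \dist(x,y) + \alpha_{\eps,\delta}$, and the upper accuracy bound combined with the lower embedding bound yields $\mecapprox(y,\database) \ge \tfrac{1}{C n}\sum_x \dist(x,y) - \alpha_{\eps,\delta}$, which is exactly the two-sided inequality in the statement (matching the paper's scaling convention in the display). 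The $\eps$-differentially private case is identical, plugging in $\alpha_\eps$ instead.

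There is essentially no obstacle here: the content of the theorem is a mechanical composition of two tools we already have. The only subtlety worth flagging is that the accuracy in Theorem~\ref{thm:l1-interactive} is additive and agnostic to the magnitudes of the distances, so the multiplicative factor $1/C$ appears only on the lower side, reflecting the asymmetric nature of a one-sided contraction guarantee; this is why the theorem is stated as a two-sided inequality rather than as an $\ell_\infty$ accuracy bound. Note also that privacy of $\mecapprox$ is handled separately (through $1$-sensitivity of the embedding $\pi$, which preserves adjacency of $\database$ and $\database'$), and is not part of what needs to be proved here.
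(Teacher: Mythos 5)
Your proposal follows exactly the paper's own route: apply the $\ell_1$-mechanism accuracy guarantee (Theorem~\ref{thm:l1-interactive}) to the proxy database $\database'$ and the embedded queries, then sandwich $\frac{1}{n}\sum_x \|\pi(x)-\pi(y)\|_1$ between $\frac{1}{Cn}\sum_x d(x,y)$ and $\frac{1}{n}\sum_x d(x,y)$ using the contraction and expansion bounds, and chain the two. Your remark that the $1/C$ factor appears only on the lower side because contraction is one-sided is also how the paper frames it. The only point the paper makes that you leave implicit is that expansion $1$ (together with the diameter-$1$ assumption on $\metric$) keeps the image of the embedding inside a bounded region, which is what licenses invoking the $\ell_1$-mechanism at all; it is worth stating, but it is not a gap in the argument.
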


\begin{remark}
  \label{remark:accuracy}
  If the embedding is nearly isometric, i.e., we can achieve contraction $1+\alpha$ for any small $\alpha > 0$ by embedding in to an $\ell(\alpha)$-dimension $\ell_1$ space, then we will choose the optimal additive error bound such that
  \bound{$$\alpha_{\eps, \delta} = \tilde{O} \left( \frac{\ell(\alpha_{\eps, \delta})^{9/5}}{n^{4/5} \eps^{4/5}} \right) \enspace.$$}
  and
  \bound{$$\alpha_\eps = \tilde{O} \left( \frac{\ell(\alpha_\eps)^{7/3}}{n^{2/3} \eps^{2/3}} \right) \enspace.$$}
\end{remark}

\begin{proof}
  Let us prove the error bound for $\eps$-differential privacy. The proof of the error bound for $(\eps, \delta)$-differential privacy is similar.  We will view the embedding $\pi$ as from $(X, d)$ to $(\pi(X), \|.\|_1)$. Since the embedding $\pi$ has expansion $1$, the image $\pi(X)$ of $X$ has diameter $1$ as well. Let $\meclonepure$ denote the $\eps$-differentially private mechanism for releasing answers to the $\ell_1$ distance queries. Then we have that
  \begin{eqnarray*}
    \mecpure(y, D) & = & \meclonepure(\pi(y), D') \\
    & \le & \sum_{x \in D} \|\pi(x) - \pi(y)\|_1 + \bound{\tilde{O} \left( \frac{\ell^{7/3}}{n^{2/3} \eps^{2/3}} \right)} \\
    & \le & \sum_{x \in D} d(x, y) + \bound{\tilde{O} \left( \frac{\ell^{7/3}}{n^{2/3} \eps^{2/3}} \right)} \enspace.
  \end{eqnarray*}
  The proof of the lower bound is similar, hence omitted.
\end{proof}

Next we will turn to the privacy guarantee of the mechanism. Since we are using either an $(\eps, \delta)$-differentially private mechanism or an $\eps$-differentially private mechanism for releasing answers to the $\ell_1$ distance queries with respect to~the proxy database $D'$, it suffices to ensure that the embeddings of neighboring databases remain neighboring databases. In general, the embedding of some point $x$ may be defined in terms of other data points $y$, which would violate this condition. Formally, we want our embeddings to be $1$-sensitive:

\begin{definition}
  An embedding $\pi$ from $(\metric, \dist)$ to $([0, 1]^\ell, \|.\|_1)$ is {\em $1$-sensitive} if changing a data point $x_i \in \database$ will only change the embedding $\pi(x_i)$ of $x_i$ and will not affect the embedding $\pi(x_j)$ of other $x_j \in \database$ for any $j \ne i$.
\end{definition}

\begin{theorem} \label{thm:privacy}
  If the embedding $\pi$ is $1$-sensitive, and if we use the $(\eps, \delta)$-differentially private mechanism (resp., $\eps$-differentially private mechanism) for releasing answers to the $\ell_1$ distance queries, then the mechanism $\mecapprox$ (resp., $\mecpure$) is $(\eps, \delta)$-differentially private (resp., $\eps$-differentially private).
\end{theorem}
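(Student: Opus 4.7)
The plan is to exhibit $\mecapprox$ as a deterministic post-processing of the inner $\ell_1$ mechanism $\mecloneapprox$ applied to the proxy database $\database'$ and the embedded queries $\pi(\query)$, and then invoke the assumed $\ell_1$ privacy guarantee. The pure $\eps$-DP case for $\mecpure$ is identical with $\delta = 0$ and $\meclonepure$ replacing $\mecloneapprox$, so I would prove both simultaneously.

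Concretely, I would fix any two adjacent databases $\database_1, \database_2$ differing in a single index $i$. By the definition of $1$-sensitivity, $\pi(x_j)$ agrees across $\database_1$ and $\database_2$ for every $j \neq i$, so the corresponding proxy databases $\database_1' = \{\pi(x) : x \in \database_1\}$ and $\database_2' = \{\pi(x) : x \in \database_2\}$ differ in exactly one element and are therefore neighboring in $([0,1]^\ell, \|\cdot\|_1)$ in the sense required by $\mecloneapprox$. I would also argue that the set of $\ell_1$ queries fed to the inner mechanism, $\{\pi(y) : y \in \query\}$, is identical in both cases, since query points are not among the data points whose embeddings can shift. Then for any measurable event $S$ in the output space,
\[
\Pr[\mecapprox(\database_1,\query) \in S] = \Pr[\mecloneapprox(\database_1',\pi(\query)) \in S] \le e^\eps \Pr[\mecloneapprox(\database_2',\pi(\query)) \in S] + \delta = e^\eps \Pr[\mecapprox(\database_2,\query) \in S] + \delta,
\]
where the central inequality is the $(\eps,\delta)$-differential privacy of $\mecloneapprox$ on the adjacent pair $(\database_1',\database_2')$, and the two equalities follow from the fact that $\mecapprox$ merely relabels $\ell_1$ answers as answers to the corresponding original queries.

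The main point requiring care is the interaction between the embedding and the query points: the stated definition of $1$-sensitivity only constrains how $\pi$ acts on data points, so strictly speaking one has to check that a single-element change in $\database$ does not alter $\pi(y)$ for any $y \in \query$ either. For the oblivious nearly-isometric $\ell_2 \to \ell_1$ embedding this is immediate; for the Bourgain-style construction used later, the embedding has to be set up so that the image of a query point is invariant under a single data-element change (e.g., by letting the data-dependence of $\pi$ be mediated only through features that are themselves stable under such a change). Once this invariance on $\query$ is established, the post-processing chain above yields $(\eps,\delta)$-differential privacy for $\mecapprox$, and the same argument with $\delta=0$ gives $\eps$-differential privacy for $\mecpure$.
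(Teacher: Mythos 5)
Your proof is correct and follows the same route as the paper's: both reduce to the fact that $1$-sensitivity sends neighboring databases to neighboring (or identical) proxy databases, and then treat $\mecapprox$ (resp.\ $\mecpure$) as a deterministic relabeling of the output of the inner $\ell_1$ mechanism on $(\database', \pi(\query))$. Your additional observation that the embedded query set $\pi(\query)$ must itself be invariant under a single-element change to $\database$ is a genuine subtlety the paper's proof leaves implicit (it is satisfied because the oblivious $\ell_2 \to \ell_1$ embedding and the Bourgain-style embedding depend only on randomness and the query set, not on $\database$), and noting it makes the argument more careful than the one in the paper.
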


\begin{proof}
  For any two neighboring databases $\database_1$ and $\database_2$, the resulting proxy databases $\database'_1$ and $\database'_2$ in Figure \ref{fig:generic} will either be the same or be neighboring databases since the embedding $\pi$ is $1$-sensitive. Since we are using an $\eps$-differentially private mechanism for releasing $\ell_1$ distances over the proxy databases, we get that for any set of queries $\query$ and for any subset $S$ of possible answers,
  \begin{align*}
    \Pr[\mecpure (\database_1, \query) \in S] & = \Pr[\meclonepure (\database'_1, \pi(\query)) \in S] \\
    & \le \exp(\eps) \Pr[\meclonepure (\database'_2, \pi(\query)) \in S] \\
    & = \Pr[\mecpure (\database_2, \query) \in S] \enspace.
  \end{align*}
  So mechanism $\mecpure$ is $\eps$-differentially private. The proof for $(\eps, \delta)$-differential privacy is similar and hence omitted.
\end{proof}

\begin{remark}
  In principle, we can also consider $s$-sensitive embeddings for small $s$. However, we are not aware of any useful embeddings of this kind. So we will focus on $1$-sensitive embeddings in this paper.
\end{remark}

\begin{remark}
  \label{remark:generic}
  If the embedding $\pi$ is independent of the set $\query$ of queries, then the mechanisms in Figure \ref{fig:generic} can be made interactive or non-interactive by using the interactive or non-interactive mechanisms respectively for releasing answers to the $\ell_1$ distance queries. If the embedding is a function of the query set, then the mechanism will be non-interactive, because potentially all of the queries may be needed to construct the embedding of the database.
\end{remark}

\subsection{Releasing Euclidean Distance via an Oblivious Embedding}

Let us consider releasing distance queries with respect to Euclidean distance. From the metric embedding literature we know that there exists an almost isometric embedding from $\ell_2$ to $\ell_1$. More precisely,

\begin{lemma}[E.g., \cite{FLM77, I06}]
  \label{lem:lp}
  There is an embedding $\pi$ from $([0, 1]^\ell, \|.\|_2)$ to $([0, 1]^{\ell'}, \|.\|_1)$ with expansion $1$, contraction $1+\alpha$, and $\ell' = O \left( \frac{\ell \log(1/\alpha)}{\alpha^2} \right)$. Further, this embedding can be probabilistically constructed in polynomial time by defining each coordinate as a random projection.
\end{lemma}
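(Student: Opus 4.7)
The plan is to realize $\pi$ as a normalized Gaussian random projection and to bound its distortion by a pointwise concentration inequality, then lift the bound from one point to all of $\R^{\ell}$ via an $\varepsilon$-net on the unit sphere. Concretely, I would let $A\in\R^{\ell'\times\ell}$ have i.i.d.\ standard Gaussian entries and set $\pi(x) = \frac{1}{c\ell'}\,Ax$, where $c = \mathbb{E}\,|N(0,1)| = \sqrt{2/\pi}$. This map is linear, is sampled in time $O(\ell\ell')$, and a global translation plus $\Theta(1)$ rescaling (using that the operator norm of $\frac{1}{\ell'}A$ is $O(1)$ with high probability) places its image inside the target cube $[0,1]^{\ell'}$ without affecting the distortion ratio.

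For any fixed $v\in\R^{\ell}$ the coordinates $(Av)_i$ are i.i.d.\ $N(0,\|v\|_2^{2})$, so by construction $\mathbb{E}\,\|\pi(v)\|_1 = \|v\|_2$. Each $|(Av)_i|$ is sub-Gaussian with parameter $O(\|v\|_2)$, so by a standard Hoeffding/Bernstein bound for sums of independent sub-Gaussians I would obtain
\[
\Pr\!\Bigl[\,\bigl|\,\|\pi(v)\|_1 - \|v\|_2\,\bigr| > \tfrac{\alpha}{4}\|v\|_2\,\Bigr] \;\le\; 2\exp\!\bigl(-c_1\alpha^{2}\ell'\bigr)
\]
for some absolute constant $c_1 > 0$. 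To turn this into a simultaneous bound over the whole space I would fix an $\tfrac{\alpha}{16}$-net $N$ on the $\ell_2$ unit sphere $S^{\ell-1}$, whose size is at most $(O(1/\alpha))^{\ell}$ by a volume argument, and choose $\ell' = C\,\ell\log(1/\alpha)/\alpha^{2}$ with $C$ large enough that the union bound forces $\bigl|\,\|\pi(v)\|_1 - 1\bigr| \le \alpha/4$ for every $v\in N$, with probability at least $1/2$.

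The final step is to extend from $N$ to all of $S^{\ell-1}$. Using linearity, for arbitrary $v\in S^{\ell-1}$ I would iteratively peel off the nearest net point $v^{\ast}$ and re-approximate the residual $v - v^{\ast}$ (which has small $\ell_2$ norm) by rescaled net points; the resulting geometric series yields $\bigl|\,\|\pi(v)\|_1 - 1\bigr|\le \alpha/2$ for every unit $v$, and by $1$-homogeneity for every $v\in\R^{\ell}$. Hence $\pi$ has expansion and contraction at most $1 + \alpha$; a final scaling of $\pi$ by $1/(1+\alpha)$ sets the expansion to exactly $1$ and the contraction to at most $(1+\alpha)^{2} = 1 + O(\alpha)$, which is absorbed into $\alpha$ by a constant-factor readjustment. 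The main obstacle is exactly this net-and-extension step: the Gaussian concentration only controls one vector at a time, and promoting it to a guarantee simultaneously valid over the entire (continuous) data-query space is what forces both the $\log(1/\alpha)$ factor (from $\log|N|$) and the source dimension $\ell$ into the target dimension $\ell'$.
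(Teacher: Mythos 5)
Your proof sketch is correct and reproduces the standard net-based argument behind the cited references \cite{FLM77, I06}; the paper itself gives no proof of this lemma, treating it as a known black box. The one place worth being slightly more careful is the final cube-fitting step: what matters is not the operator norm of $\frac{1}{\ell'}A$ per se, but that after the net argument you already know $\|\pi(v)\|_1 \le (1+\alpha/2)\|v\|_2$ for all $v$, so a diameter-$1$ subset of $\ell_2$ (which is the implicit domain throughout the paper) maps to an $\ell_1$ set of diameter at most $1+\alpha/2$, and then a rescaling by $1/(1+\alpha/2)$ followed by a coordinatewise translation places the image inside $[0,1]^{\ell'}$ while setting the expansion to $1$ and the contraction to $\frac{1+\alpha/2}{1-\alpha/2} = 1+O(\alpha)$, as required.
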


Since the above embedding is based on random projections, it is $1$-sensitive and independent of the set $\query$ of queries. Thus we can plug this embedding into our framework in Figure \ref{fig:generic} and the following theorem for releasing Euclidean distances follows from Theorem \ref{thm:accuracy}, Remark \ref{remark:accuracy}, Theorem \ref{thm:privacy}, and Remark \ref{remark:generic}. 

\begin{theorem} \label{thm:l2}
  Suppose $(\metric, \|.\|_2)$ is a subspace of the $\ell_2$ space with diameter $1$. Then, there are polynomial time interactive and non-interactive mechanisms for releasing answers to the $\ell_2$ distance queries that are $(\eps, \delta)$-differentially private and $(\alpha_{\eps, \delta}, \beta)$-accurate for $\alpha_{\eps, \delta}$ satisfying
  \bound{$$\alpha_{\eps, \delta} = \tilde{O} \left( \frac{\ell^{9/23}}{n^{4/23} \eps^{4/23}} \right) \enspace.$$}
  There are also polynomial time interactive and non-interactive mechanisms for releasing answers to the $\ell_2$ distance queries that are $\epsilon$-differentially private and $(\alpha_\eps, \beta)$-accurate for $\alpha_\eps$ satisfying
  \bound{$$\alpha_\eps = \tilde{O} \left( \frac{\ell^{7/17}}{n^{2/17} \eps^{2/17}} \right) \enspace.$$}
  The omitted poly-log factors depends on $\ell$, $n$, and $\beta^{-1}$ (and $\delta^{-1}$ for $(\eps, \delta)$-differential privacy) in the offline setting. In the interactive setting, this factor also depends on $\log k$. We remark again that in the \emph{offline} setting, the constructed data structure can answer \emph{all} $\ell_2$ queries.
\end{theorem}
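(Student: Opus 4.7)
The plan is to apply the generic framework of Figure \ref{fig:generic} with the almost-isometric embedding of Lemma \ref{lem:lp}. The three verifications are: \emph{privacy} (the embedding is $1$-sensitive), \emph{mode} (the embedding is oblivious to the queries, so both interactive and offline versions exist), and \emph{accuracy} (optimize the error trade-off). For privacy, I would observe that the embedding in Lemma \ref{lem:lp} is obtained by applying the \emph{same} random projection to every point in $\metric$. In particular, the image $\pi(x_i)$ of a data point depends only on $x_i$ and on the shared (public, data-independent) projection matrix; swapping one data point in $\database$ therefore changes only one coordinate of the proxy database $\database'$. This is exactly the $1$-sensitivity condition in Definition, so Theorem \ref{thm:privacy} gives the desired $(\eps,\delta)$- or $\eps$-differential privacy guarantee. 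Because the projection is also chosen independently of $\query$, Remark \ref{remark:generic} immediately yields both an interactive and an offline instantiation.

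For accuracy, I would invoke Theorem \ref{thm:accuracy} with expansion $1$, contraction $C = 1+\alpha$, and target dimension $\ell' = O(\ell \log(1/\alpha)/\alpha^2)$. Since $\metric$ has diameter $1$, the multiplicative $(1+\alpha)$-loss inflates any true answer by at most an additive $\alpha$, so the two sources of error combine into a single additive $\alpha$-term once we set $\alpha$ equal to the $\ell_1$-mechanism's additive error. This is exactly the fixed-point equation in Remark \ref{remark:accuracy}.

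The only real computation is solving that fixed point. For the $(\eps,\delta)$-case, substituting $\ell' = \tilde{O}(\ell/\alpha^2)$ into $\alpha = \tilde{O}(\ell'^{9/5}/(n\eps)^{4/5})$ gives
\[
\alpha \;=\; \tilde{O}\!\left( \frac{(\ell/\alpha^2)^{9/5}}{n^{4/5}\eps^{4/5}} \right)
\quad\Longrightarrow\quad
\alpha^{23/5} = \tilde{O}\!\left( \frac{\ell^{9/5}}{n^{4/5}\eps^{4/5}} \right),
\]
so $\alpha = \tilde{O}(\ell^{9/23}/(n\eps)^{4/23})$. Analogously, for the pure $\eps$-DP case, $\alpha = \tilde{O}(\ell'^{7/3}/(n\eps)^{2/3})$ combined with $\ell' = \tilde{O}(\ell/\alpha^2)$ gives $\alpha^{17/3} = \tilde{O}(\ell^{7/3}/(n\eps)^{2/3})$, i.e.\ $\alpha = \tilde{O}(\ell^{7/17}/(n\eps)^{2/17})$. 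These match the two exponents claimed in Theorem \ref{thm:l2}.

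The ``main obstacle'' is essentially a book-keeping one: the dimension $\ell'$ of the target $\ell_1$ space depends on the slack parameter $\alpha$, which is also the error one is trying to bound, so the bound must be obtained as a fixed point rather than by direct substitution. Aside from solving that algebraic equation, everything else is an immediate plug-in of Theorems \ref{thm:accuracy} and \ref{thm:privacy} together with the guarantees of the $\ell_1$-mechanism from Theorems \ref{thm:l1-interactive} and \ref{thm:l1-offline}; the hidden polylog factors are exactly those inherited from the $\ell_1$-mechanism (depending on $\ell$, $n$, $\beta^{-1}$, and $\delta^{-1}$, plus $\log k$ in the interactive case), which is the dependence stated in the theorem.
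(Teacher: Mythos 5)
Your proposal is correct and follows the paper's own proof route exactly: apply the oblivious random-projection embedding of Lemma~\ref{lem:lp} inside the framework of Figure~\ref{fig:generic}, observe that obliviousness gives $1$-sensitivity (hence Theorem~\ref{thm:privacy} applies) and query-independence (hence Remark~\ref{remark:generic} gives both interactive and offline variants), and then solve the fixed-point equation from Remark~\ref{remark:accuracy}; your algebra $\alpha^{23/5}=\tilde{O}(\ell^{9/5}/(n\eps)^{4/5})$ and $\alpha^{17/3}=\tilde{O}(\ell^{7/3}/(n\eps)^{2/3})$ reproduces the exponents $9/23,\,4/23$ and $7/17,\,2/17$ exactly. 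The paper leaves these computations implicit, so your write-up is if anything a bit more explicit than the original (and the observation that the diameter-$1$ assumption lets the multiplicative $(1+\alpha)$ contraction be absorbed into the additive term is precisely the step the paper glosses over in Remark~\ref{remark:accuracy}); the only minor wording issue is that replacing one data point changes one \emph{element} (a whole embedded vector), not one ``coordinate,'' of the proxy database.
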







  We remark that Lemma \ref{lem:lp} also holds for $\ell_p$ metrics for $p \in (1, 2)$ (E.g., \cite{FLM77}). So the results stated in Theorem \ref{thm:l2} also apply to $\ell_p$ metrices for $p \in (1, 2)$. Details are omitted.

\subsection{Releasing Distances for General Metric via Bourgain's Theorem}

In this section, we will consider releasing distance queries with respect to an arbitrary metric $(\metric, \dist)$ by embedding it into an $\ell_1$ metric. Bourgain's theorem (e.g., \cite{B85, LLR95}) suggests that for any $m$ points in the metric space, there is an embedding into an $O(\log^2 m)$-dimensional $\ell_1$ space with distortion $O(\log m)$. Unfortunately, this embedding is not oblivious and does not have low sensitivity. However, recall that for the purpose of releasing distance queries, we only need to preserve the distances between all data-query pairs. In other words, it is okay to have the distances between data points (and likewise, between query points) to be highly distorted. Further, we show that for this weaker notion of embedding, there is a variation of Bourgain's theorem using an embedding that is oblivious to the data points, and hence has sensitivity $1$.

Concretely, we will consider the embedding given in Figure \ref{fig:bourgain}. The idea is to define the embedding only using the query points and we will show this is enough to preserve the distances from any point in the metric space to the query points with high probability. Formally, we will prove the following theorem.

\begin{figure}
  \centering
  \fbox{
  \iffullversion
  \begin{minipage}{.98\textwidth}
  \else
  \begin{minipage}{.46\textwidth}
  \fi
    {\bf 1-sensitive variant of Bourgain: Embedding an arbitrary metic space $(X, d)$ into $\ell_1$}

    \medskip

    \noindent{\bf Pre-processing:~} For $1 \le i \le \log k$ and $1 \le j \le K$, where $K$ is chosen to be $512 (\log k + \log n)$, choose a random subset $S_{ij}$ of the query points by picking each query point $y$ independently with probability $2^{-(i-1)}$.

    \medskip

    \noindent{\bf Embedding:~} Given $x$ in the metric space $(X, d)$, embed it into $\{\pi_{ij}(x)\}_{0 \le i \le \log k, 1 \le j \le K}$ in the $O(K \log k)$-dimension $\ell_1$ space by letting $\pi_{ij}(x) = \frac{1}{K \log k} d(x, S_{ij})$.
  \end{minipage}}
  \caption{A randomized $1$-sensitive embedding of an arbitrary metric space $(X, d)$ into an $O(\log^2 k + \log k \log n)$-dimension $\ell_1$ space with $O(\log k)$ distortion} \label{fig:bourgain}
\end{figure}

\begin{theorem} [$1$-Sensitive Variant of Bourgain] \label{thm:bourgain}
  In the embedding given in Figure \ref{fig:bourgain}, for any data point $x \in \metric$ and any query point $y \in \query$, with probability at least $1 - \frac{1}{n^2 k}$, we have
  $$\frac{1}{64 \log k} d(x, y) \le \|\pi(x) - \pi(y)\|_1 \le d(x, y) \enspace.$$
\end{theorem}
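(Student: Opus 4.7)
The plan is to prove the two bounds separately, following the structure of Bourgain's classical embedding theorem. The upper bound is the easy half: for each coordinate $(i,j)$, the triangle inequality yields $|d(x, S_{ij}) - d(y, S_{ij})| \le d(x,y)$, so each coordinate contributes at most $\frac{d(x,y)}{K \log k}$ to $\|\pi(x) - \pi(y)\|_1$. Summing over the $(\log k + 1) K$ coordinates recovers the claimed $\|\pi(x) - \pi(y)\|_1 \le d(x,y)$ (absorbing the harmless $(1 + 1/\log k)$ slack into the $1/(K \log k)$ normalization).

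For the lower bound I would adapt Bourgain's telescoping argument, using the fact that $S_{ij}$ samples only from $\query$. First, define radii $\rho_0 = 0 \le \rho_1 \le \cdots$ by letting $\rho_i$ be the largest $r \le d(x,y)/4$ with both $|B(x,r) \cap \query| < 2^i$ and $|B(y,r) \cap \query| < 2^i$; let $t \le \log k$ be the smallest index with $\rho_t = d(x,y)/4$. For each scale $i \in \{1,\ldots,t\}$ and each $j$, just beyond $\rho_{i-1}$ at least one of the query-point balls $B(x, \rho_{i-1}) \cap \query$ or $B(y, \rho_{i-1}) \cap \query$ has size $\ge 2^{i-1}$ (WLOG the one around $y$), while $|B(x, \rho_i) \cap \query| < 2^i$ by definition. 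The cap $\rho_i \le d(x,y)/4$ ensures $\rho_i + \rho_{i-1} < d(x,y)$, so $B(x, \rho_i)$ and $B(y, \rho_{i-1})$ are disjoint. Consequently the events ``$S_{ij}$ hits $B(y, \rho_{i-1})$'' and ``$S_{ij}$ misses $B(x, \rho_i)$'' depend on disjoint sets of independent coins, and with sampling rate $p_i = 2^{-(i-1)}$ each has constant probability by the standard calculation $(1-p_i)^{2^{i-1}} \in [e^{-1},1/2]$ and $(1-p_i)^{2^i} \ge 1/16$. Hence the joint event $E_{ij}$ has probability at least some absolute constant $c$, and on $E_{ij}$ the contribution to $|d(x, S_{ij}) - d(y, S_{ij})|$ is at least $\rho_i - \rho_{i-1}$.

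Next I would apply a Chernoff bound to the $K$ independent trials at each scale to show at least $cK/2$ of them lie in $E_{ij}$ with failure probability $\exp(-\Omega(K))$; with $K = 512(\log k + \log n)$ this is at most $1/(n^2 k \log k)$, and a union bound over the $\le \log k$ scales yields overall failure probability $\le 1/(n^2 k)$ as claimed. Summing the surviving contributions and dividing by the $K \log k$ normalization, the telescoping sum $\sum_{i=1}^t (\rho_i - \rho_{i-1}) = \rho_t = d(x,y)/4$ gives $\|\pi(x) - \pi(y)\|_1 \ge \frac{c}{8 \log k}\, d(x,y)$, matching the stated $\frac{1}{64 \log k}$ bound once the small constants ($c \ge 1/32$) are tracked.

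The main obstacle is the \emph{asymmetric} setting: $y$ is a query point but $x$ is an arbitrary metric-space point, and crucially $S_{ij}$ is drawn only from $\query$. Bourgain's original proof chooses radii based on ball cardinalities in the full point set; here one must redefine them using only query-point counts $|B(\cdot, r) \cap \query|$ and verify that the sampling-and-independence calculation still produces a constant-probability event at every scale $i \le t$. Once the radii are set up this way, the rest of the argument (disjointness via the $d(x,y)/4$ cap, Chernoff, telescoping) is essentially bookkeeping and mirrors Bourgain's original proof.
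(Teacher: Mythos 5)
Your high-level route matches the paper's: expansion via the triangle inequality, contraction via a Bourgain-style telescoping argument that counts only query points in balls, a Chernoff bound over the $K$ independent trials at each scale, and a union bound over the $O(\log k)$ scales. The key observation you flag---that $y \in \query$ gives a usable base case while $x$ may be arbitrary, so the radii must be defined via $|B(\cdot, r) \cap \query|$---is exactly the paper's observation.

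There is, however, a genuine gap at the finest scale $i=1$, which the paper handles with a separate deterministic argument (their Lemma~\ref{lem:3}) and your proposal does not. At $i=1$ the sampling rate is $p_1 = 2^{-(1-1)} = 1$, so $S_{1j} = \query$ deterministically. Your event $E_{1j}$ requires $S_{1j}$ to miss $B(x, \rho_1)$, i.e., $B(x, \rho_1) \cap \query = \emptyset$; but by your definition $\rho_1$ is the \emph{largest} $r \le d(x,y)/4$ with $|B(x,r) \cap \query| < 2$, so $B(x, \rho_1) \cap \query$ can contain one query point, in which case $\Pr[E_{1j}] = 0$. The cited bound ``$(1-p_i)^{2^i} \ge 1/16$'' is simply false at $i=1$, where it equals $0$ (and, as a side note, $(1-p_i)^{2^{i-1}}$ is at most $e^{-1}$, not in $[e^{-1},1/2]$). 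This is not a constant to be tracked: if some query point sits at $x$ (so $d(x,\query)=0$) and all other query points are far away, then $t=1$ and $\rho_1 = d(x,y)/4$, so \emph{all} of your telescoping mass is at a scale where your probabilistic argument is vacuous; meanwhile the deterministic $i=1$ contribution is $\frac{1}{\log k}|d(x,\query) - d(y,\query)| = 0$, and the true contraction in this instance actually comes from the coarser scale $i=2$, which your radii $\rho_i$ (capped at $d(x,y)/4$) assign zero telescoping mass.

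The paper avoids this by defining $r_i$ (resp.\ $r_i'$) as the \emph{smallest} radius such that $B(x,r_i)$ (resp.\ $B(y,r_i')$) contains at least $2^{i-1}$ query points, setting $r_i^* = \max\{r_i,r_i'\}$, capping $i'$ so that $r^*_{i'} \ge d(x,y)/2$, and handling $i=1$ deterministically: since $y \in \query$, $d(y, S_{1j}) = 0$ and $d(x, S_{1j}) = d(x,\query) = r^*_1$, so the full scale-$1$ contribution $\frac{1}{\log k} r^*_1$ is obtained with probability $1$, no Chernoff needed. The probabilistic argument is then applied only to $1 < i \le i'$, where $p_i \le 1/2$. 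To repair your proposal you would need to realign your radii with this base case: the issue is not merely that $i=1$ needs a special case, but that your ``largest $r$ with $<2^i$ points in both balls, capped at $d(x,y)/4$'' definition does not telescope to a quantity that the deterministic $i=1$ contribution plus the probabilistic $i \ge 2$ contributions can jointly cover.
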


The proof of the above theorem is very similar to one of the proofs for  Bourgain's original theorem. The expansion bound is identical. The contraction bound  will only guarantee the embedded distance of two pair of points $x \in \database$ and $y \in \query$ satisfies $d'(\pi(x), \pi(y)) \ge O(\frac{1}{\log k}) (d(x, y) - d(x, \query))$. We observe that the additive loss of $O(\frac{1}{\log k}) d(x, \query)$ can be avoided by using an additional $O(\log k + \log n)$ dimensions in the embedding. We include the proof below for completeness.

\begin{proof}
  \underline{\em Expansion:}~ By triangle inequality, $|\pi_{ij}(x) - \pi_{ij}(y)| = \frac{1}{K \log k} |d(x, S_{ij}) - d(y, S_{ij})| \le \frac{1}{K \log k} d(x, y)$. Summing over $0 \le i \le \log k$ and $1 \le j \le K$ we have $\sum_{i = 0}^{\log k} \sum_{j=1}^K |\pi_{ij}(x) - \pi_{ij}(y)| \le d(x, y)$.

  \medskip

  \noindent\underline{\em Contraction:}~ Let us first define some notation. Let $r_i$ and $r'_i$ denote the smallest radius such that the closed ball (with respect to~metric $(X, d)$, similar hereafter) $B(x, r_i)$ and $B(y, r'_i)$ respectively contains at least $2^{i-1}$ query points. Let $r^*_i = \max \{ r_i, r'_i \}$. We will have that $r^*_i$ is non-decreasing in $i$. Let $i'$ denote the largest index such that $r^*_{i'} + r^*_{i'-1} \le d(x, y)$. Redefine $r^*_{i'}$ to be $d(x, y) - r^*_{i'-1}$. We have $r^*_{i'} \ge \frac{d(x, y)}{2}$. We will need to following lemmas.

  \begin{lemma} \label{lem:2}
    For any $1 < i \le i'$, we have $\sum_{j=1}^K |\pi_{ij}(x) - \pi_{ij}(y)| \ge \frac{1}{32 \log k} \left( r^*_i - r^*_{i-1} \right)$ with probability at least $1 - \frac{1}{n^2 k \log k}$.
  \end{lemma}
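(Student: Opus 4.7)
\medskip

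The plan is to adapt the classical Bourgain lower-bound argument to the ``query-only'' random sets $S_{ij}$, relying crucially on the definition of $i'$ to get independence between two complementary events. Fix $i$ with $1 < i \le i'$ and, without loss of generality (by symmetry in $x,y$), assume $r^*_i = r_i$, so that the closed ball $B(x,r_i)$ contains $\ge 2^{i-1}$ query points while the open ball $B^\circ(x,r_i)$ contains strictly fewer than $2^{i-1}$ query points. Also, $|B(y,r^*_{i-1})\cap \query|\ge 2^{i-2}$ because $r^*_{i-1}\ge r'_{i-1}$.

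For each $j\in[K]$ I would introduce two events on the random set $S_{ij}$ (which includes each query point independently with probability $p = 2^{-(i-1)}$):
\begin{align*}
A_1 &= \{S_{ij}\cap B(y,r^*_{i-1})\neq\emptyset\}, \\
A_2 &= \{S_{ij}\cap B^\circ(x,r_i) = \emptyset\}.
\end{align*}
The key geometric observation is that these two balls are disjoint: any $z\in B(y,r^*_{i-1})$ satisfies $d(x,z)\ge d(x,y) - r^*_{i-1} \ge r^*_i = r_i$ by definition of $i'$ (which guarantees $r^*_{i-1} + r^*_i \le d(x,y)$), so $z\notin B^\circ(x,r_i)$. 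Hence $A_1$ and $A_2$ depend on disjoint subsets of query points and are independent. When both occur, $d(y,S_{ij})\le r^*_{i-1}$ and $d(x,S_{ij})\ge r_i = r^*_i$, so $|d(x,S_{ij}) - d(y,S_{ij})| \ge r^*_i - r^*_{i-1}$.

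The probability bounds are standard. Using $(1-p)^{1/(2p)}\le e^{-1/2}$ for $p\le 1$ gives $\Pr[A_1]\ge 1-(1-p)^{2^{i-2}} \ge 1 - e^{-1/2}$, and using $(1-p)^{1/p-1}\ge e^{-1}$ for $p\le 1/2$ gives $\Pr[A_2]\ge (1-p)^{2^{i-1}-1} \ge e^{-1}$. By independence, each trial succeeds with probability at least $c_0 := (1-e^{-1/2})/e \ge 1/8$.

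Finally, since the $K$ sets $S_{i1},\ldots,S_{iK}$ are chosen independently, the successes form independent Bernoulli trials with mean $\ge c_0 K$. A Chernoff bound yields that with probability at least $1 - \exp(-c_0 K/8) \ge 1 - \exp(-8(\log k + \log n)) \ge 1 - \frac{1}{n^2 k \log k}$, at least $c_0 K/2 \ge K/16$ of the trials succeed. On this event,
\[
\sum_{j=1}^{K} |\pi_{ij}(x)-\pi_{ij}(y)| \;=\; \frac{1}{K\log k}\sum_{j=1}^{K} |d(x,S_{ij})-d(y,S_{ij})| \;\ge\; \frac{1}{K\log k}\cdot \frac{K}{16}\cdot (r^*_i - r^*_{i-1}) \;\ge\; \frac{r^*_i - r^*_{i-1}}{32\log k},
\]
as desired. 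The main subtlety is the open/closed ball distinction together with the use of the redefined $r^*_{i'}$ so that disjointness holds at the borderline $i=i'$; the probability calculations and Chernoff step are routine once independence is set up correctly.
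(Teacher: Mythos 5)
Your proof follows the paper's argument essentially step for step: fix $i\le i'$, assume WLOG $r^*_i=r_i$, compare the open ball around $x$ with the closed ball around $y$, bound the per-trial success probability by a constant, and amplify over the $K$ independent sets via Chernoff. The one place you are more careful than the paper is in explicitly establishing that $B^\circ(x,r^*_i)$ and $B(y,r^*_{i-1})$ are disjoint (using $r^*_i+r^*_{i-1}\le d(x,y)$, which holds for all $i\le i'$ by monotonicity of $r^*$), which is exactly what justifies the independence of the two events that the paper implicitly multiplies; your per-trial constant $\ge 1/8$ is also slightly sharper than the paper's $1/16$, but neither difference changes the structure of the argument.
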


  \begin{proof}
    Suppose $r^*_i = r_i$ (the other case is similar). Consider the open ball $B^o(x, r^*_i)$ and the closed ball $B(y, r^*_{i-1})$. By definition, the number of query points in $B^o(x, r^*_i)$ is less than $2^{i-1}$, and the number query points in $B(y, r^*_{i-1})$ is at least $2^{i-2}$. Since for each $1 \le j \le K$, the set $S_{ij}$ pick each query point independently with probability $2^{-(i-1)}$, the probability that $S_{ij} \cap B^o(x, r^*_i) = \emptyset$ is at least $(1-2^{-(i-1)})^{2^{i-1}} \ge \frac{1}{4}$, while the probability that $S_{ij} \cap B(y, r^*_{i-1}) \ne \emptyset$ is at least $1 - (1-2^{-(i-1)})^{2^{i-2}} \ge 1 - e^{-\frac{1}{2}}$. In sum, with probability at least $\frac{1}{4} (1 - e^{-\frac{1}{2}}) > \frac{1}{16}$, we have both $S_{ij} \cap B^o(x, r^*_i) = \emptyset$ and $S_{ij} \cap B(y, r^*_{i-1}) \ne \emptyset$, which indicates that $d(x, S_{ij}) \ge r^*_i$ and $d(y, S_{ij}) \le r^*_{i-1}$ and therefore
    \begin{equation}
      |\pi_{ij}(x) - \pi_{ij}(y)| \ge \frac{1}{K \log k} (r^*_i - r^*_{i-1}) \enspace. \label{eq:2}
    \end{equation}
    Further, by the additive form of Chernoff-Hoeffding theorem, we get that with probability at least $1 - 2^{-\frac{K}{64}} < 1 - \frac{1}{n^2 k \log k}$, \eqref{eq:2} holds for at least $\frac{K}{32}$ $i$'s. So we conclude that with probability at least $1 - \frac{1}{n^2 k \log k}$, $\sum_{j=1}^K |\pi_{ij}(x) - \pi_{ij}(y)| \ge \frac{1}{32 \log k} \left( r^*_i - r^*_{i-1} \right)$.
  \end{proof}

  \begin{lemma} \label{lem:3}
    $\sum_{j=1}^K |\pi_{1j}(x) - \pi_{1j}(y)| = \frac{1}{\log k} r^*_1$.
  \end{lemma}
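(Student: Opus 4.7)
The proof plan is essentially a deterministic calculation; no Chernoff argument is needed for the base case. First, I would observe that the sampling at level $i=1$ is trivial, because each query point is included in $S_{1j}$ independently with probability $2^{-(1-1)} = 1$. Hence $S_{1j} = \query$ deterministically for every $j \in [K]$, and consequently $\pi_{1j}(x) = \dist(x,\query)/(K \log k)$ for every $j$.

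Next I would invoke the hypothesis $y \in \query$. This gives $\dist(y, \query) = 0$, so $\pi_{1j}(y) = 0$ for every $j$. Summing $K$ identical copies of $|\pi_{1j}(x) - \pi_{1j}(y)|$ then yields $\dist(x, \query)/\log k$.

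Finally, I would unpack the definition of $r^*_1$. Since $2^{1-1} = 1$, $r_1$ is the smallest radius for which $B(x, r_1)$ contains at least one query point, i.e., $r_1 = \dist(x, \query)$; and $r'_1 = 0$ because $y$ itself lies in $\query$, so $B(y, 0)$ already contains a query point. Thus $r^*_1 = \max\{r_1, r'_1\} = \dist(x, \query)$, matching the expression above.

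There is no real obstacle here: the lemma is stated separately from Lemma \ref{lem:2} precisely so that the top level can be handled by this clean deterministic identity, rather than folded into the probabilistic telescoping argument used for $i > 1$. The only point worth flagging is that the reduction $r^*_1 = \dist(x, \query)$ relies crucially on $y \in \query$; this is exactly the reason the variant of Bourgain's theorem proved here only needs to preserve data--query distances, not arbitrary pairwise distances.
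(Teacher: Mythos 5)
Your proof is correct and follows essentially the same route as the paper's: both rely on the deterministic fact that $S_{1j} = \query$ for every $j$ (since the sampling probability at level $i=1$ is $2^0 = 1$), that $d(y, S_{1j}) = 0$ because $y \in \query$, and that $r^*_1 = r_1 = d(x, \query)$ with $r'_1 = 0$. No meaningful difference in approach.
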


  \begin{proof}
    It is easy to see that $r'_1 = 0$ because $y$ itself is a query point and $r^*_1 = r_1 = d(x, Q)$. Note that for every $j$, $S_{1j}$ equals the set of query points. So we always have $d(x, S_{1j}) = d(x, Q) = r^*_1$ and $d(y, S_{1j}) = 0$. Therefore, $|\pi_{1j}(x) - \pi_{1j}(y)| = \frac{1}{K \log k} r^*_1$ for $1 \le j \le K$, and summing up completes the proof.
  \end{proof}

  By Lemma \ref{lem:2} and union bound, with probability at least $1 - \frac{1}{n^2 k}$, we have
  $$\sum_{j=1}^K |\pi_{ij}(x) - \pi_{ij}(y)| \ge \frac{1}{32 \log k} \left( r^*_i - r^*_{i-1} \right)$$
  for all $1 < i \le i'$. By Lemma \ref{lem:3} we have
  $$\sum_{j=1}^K |\pi_{1j}(x) - \pi_{1j}(y)| \ge \frac{1}{\log k} r^*_1 \enspace.$$
  Summing them up we get that
  $$\sum_{i=1}^{\log k} \sum_{j=1}^K |\pi_{ij}(x) - \pi_{ij}(y)| \ge \frac{1}{32 \log k} r^*_{i'} \ge \frac{1}{64 \log k} d(x, y) \enspace.$$
\end{proof}

As a corollary of Theorem \ref{thm:bourgain} and union bound we have

\begin{corollary}
  In the embedding given in Figure \ref{fig:bourgain}, with probability at least $1 - \frac{1}{n}$, we have that for any data point $x \in D$ and any query points $y$,
  $$\frac{1}{64 \log k} d(x, y) \le \|\pi(x) - \pi(y)\|_1 \le d(x, y) \enspace.$$
\end{corollary}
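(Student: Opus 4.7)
The plan is to derive the corollary as an immediate union bound over Theorem \ref{thm:bourgain}. Theorem \ref{thm:bourgain} gives, for each fixed pair $(x, y)$ with $x \in \metric$ and $y \in \query$, failure probability at most $\frac{1}{n^2 k}$ for the two-sided distortion bound. The corollary asks for the same bound to hold \emph{simultaneously} for all pairs $(x, y)$ with $x \in \database$ and $y \in \query$, so I just need to count pairs and absorb a factor of $nk$ into the failure probability.

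Concretely, I would first enumerate: there are $|\database| = n$ data points and $|\query| = k$ query points, so at most $nk$ pairs to control. For each such pair, Theorem \ref{thm:bourgain} gives probability at least $1 - \frac{1}{n^2 k}$ that $\frac{1}{64 \log k} d(x, y) \le \|\pi(x) - \pi(y)\|_1 \le d(x, y)$. Note that the randomness in the embedding $\pi$ (the random subsets $S_{ij}$ in Figure \ref{fig:bourgain}) is shared across all pairs, so a single application of the union bound is legitimate: the overall failure probability is at most $nk \cdot \frac{1}{n^2 k} = \frac{1}{n}$. Taking the complement yields the claimed $1 - \frac{1}{n}$ success probability.

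There is essentially no obstacle here, since the main technical work, namely choosing $K = 512(\log k + \log n)$ large enough to drive the per-pair failure probability below $\frac{1}{n^2 k}$ via the Chernoff--Hoeffding tail used inside Lemma \ref{lem:2}, has already been absorbed into the statement of Theorem \ref{thm:bourgain}. The only mild subtlety worth flagging in the write-up is that the corollary quantifies over $x \in \database$ (not over all $x \in \metric$), which is exactly what limits the union bound to $nk$ events and keeps the $\frac{1}{n}$ failure probability meaningful; this is consistent with the use of the embedding in mechanism $\mecapprox$ of Figure \ref{fig:generic}, where only pairs between the private database and the (fixed) query set need to be preserved.
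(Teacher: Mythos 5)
Your proof is correct and follows exactly the route the paper intends: a union bound over the $nk$ data--query pairs, each with per-pair failure probability at most $\frac{1}{n^2 k}$ from Theorem~\ref{thm:bourgain}, yielding overall failure probability at most $\frac{1}{n}$. The paper itself states only ``As a corollary of Theorem~\ref{thm:bourgain} and union bound,'' and your write-up is a faithful elaboration of that same argument.
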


Hence, there exists an embedding of an arbitrary metric to an $\ell_1$ metric with distortion $O(\log k)$ that is $1$-sensitive because it is oblivious to the data points. The dimension of the resulting $\ell_1$ metric is $O(\log^2 k + \log k \log n)$. We remark that the expansion guarantee may fail with some small probability, in which case the diameter of our embedding may be greater than $1$. This would appear to require us to move to an $(\epsilon,\delta)$-privacy guarantee, but it does not: when computing $\ell_1$ distances between points $x,y$, we can instead compute $\min(1,|\pi(x)-\pi(y)|_1)$. In the high probability event in which the expansion guarantee of the embedding holds, this will be exactly equal to the true distance between the embeddings of the points $x$ and $y$. In the small probability event in which the expansion guarantee fails, the resulting queries will remain $1/n$ sensitive in the private data. So by combining Theorem \ref{thm:accuracy}, Theorem \ref{thm:privacy}, and Theorem \ref{thm:bourgain} we have the following theorem.

\begin{theorem}
  For any metric space $(\metric, \dist)$, there is a non-interactive mechanism running in time poly$(n,k)$ for releasing answers to any $k$ distance queries with respect to~$(\metric, \dist)$ that is $(\eps, \delta)$-differentially private, such that with high probability it answers every query $y \in Q$ with accuracy
  \begin{align*}
    O\left(\frac{1}{\log k}\right) \frac{1}{n}\sum_{x \in D} d(x, y) - \bound{\tilde{O} \left( \frac{1}{n^{4/5} \eps^{4/5}}\right)} \le \mecapprox^{(\metric, \dist)} (y) \\
    \le \frac{1}{n}\sum_{x \in D} d(x, y) + \bound{\tilde{O} \left( \frac{1}{n^{4/5} \eps^{4/5}}\right)} \enspace.
  \end{align*}
  There is also a non-interactive mechanism running in time poly$(n,k)$ for releasing answers to any $k$ distance queries with respect to~$(\metric, \dist)$ that is $\eps$-differentially private, such that with high probability it answers every query $y \in Q$ with accuracy
  \begin{align*}
    O\left(\frac{1}{\log k}\right) \sum_{x \in D} d(x, y) - \bound{\tilde{O} \left( \frac{1}{n^{2/3} \eps^{2/3}} \right)} \le \mecpure^{(\metric, \dist)} (y) \\
    \le \sum_{x \in D} d(x, y) + \bound{\tilde{O} \left( \frac{1}{n^{2/3} \eps^{2/3}} \right)} \enspace.
  \end{align*}
\end{theorem}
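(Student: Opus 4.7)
The plan is to instantiate the generic framework of Figure \ref{fig:generic} with the $1$-sensitive Bourgain-style embedding $\pi$ of Figure \ref{fig:bourgain}. Concretely: given the query set $\query$ (which is available to a non-interactive mechanism because the mechanism is allowed to be a function of $\query$ in the sense of Remark \ref{remark:generic} — actually only the mechanism must handle all queries, but $\pi$ depends only on $\query$, so we use the offline $\ell_1$ mechanism from Theorem \ref{thm:l1-offline}), first sample the random sets $S_{ij}$ independently of the data, form the proxy database $\database' = \{\pi(x) : x \in \database\}$ in $[0,1]^\ell$ with $\ell = O(\log^2 k + \log k \log n)$, and release each query $y \in \query$ via the appropriate $(\eps,\delta)$- or $\eps$-differentially private $\ell_1$ mechanism, but computing the clipped distance $\min(1, \|\pi(x)-\pi(y)\|_1)$ so as to stay inside the unit-diameter regime assumed by Theorem \ref{thm:l1-offline}.

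For privacy, I would appeal directly to Theorem \ref{thm:privacy}: since $\pi$ is oblivious to the data points and therefore $1$-sensitive, changing a single $x \in \database$ changes only a single coordinate of $\database'$. The clipping $\min(1, \cdot)$ preserves this single-point change, so the $\ell_1$ mechanism receives neighboring databases and its $(\eps,\delta)$- or $\eps$-differential privacy guarantee transfers unchanged. Crucially this holds for every realization of the random embedding, not just the good event — so the privacy proof never interacts with the expansion/contraction bounds.

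For accuracy I would condition on the event of Theorem \ref{thm:bourgain} and union-bound over all $nk$ data–query pairs to get that with probability at least $1-1/n$ every pair $(x,y) \in \database \times \query$ satisfies
\[
\tfrac{1}{64 \log k}\, d(x,y) \;\le\; \|\pi(x)-\pi(y)\|_1 \;\le\; d(x,y) \;\le\; 1,
\]
so the clipping is vacuous and the embedding has expansion $1$ and contraction $C = O(\log k)$. Plugging $C = O(\log k)$ and the dimension $\ell = O(\log^2 k + \log k \log n)$ into Theorem \ref{thm:accuracy} (with the offline error bounds of Theorem \ref{thm:l1-offline} substituted for $\alpha_{\eps,\delta}$ and $\alpha_\eps$) produces exactly the two bounds claimed: the multiplicative factor $O(1/\log k)$ on the lower side comes from the contraction, and the additive $\tilde{O}(1/(n^{4/5}\eps^{4/5}))$ and $\tilde{O}(1/(n^{2/3}\eps^{2/3}))$ absorb the $\textrm{polylog}(k,n)$ dimension factor into the $\tilde{O}$.

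The main obstacle is the conceptual point already flagged in the paragraph preceding the theorem: Theorem \ref{thm:bourgain} only guarantees the expansion bound with high probability, and a naive reading would suggest this costs us a $\delta$ term in privacy. The trick to overcome this is the clipping $\min(1, \|\pi(x)-\pi(y)\|_1)$, which is still $O(1/n)$-sensitive in $\database$ regardless of the randomness of $\pi$, so Theorem \ref{thm:privacy} applies unconditionally while Theorem \ref{thm:accuracy} applies on the high-probability event of Theorem \ref{thm:bourgain}. Once this decoupling is in place, the remainder is bookkeeping: choose parameters, invoke the three theorems, and take a union bound.
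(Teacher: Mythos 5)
Your proposal matches the paper's proof essentially line for line: instantiate the generic reduction of Figure \ref{fig:generic} with the query-oblivious Bourgain-style embedding of Figure \ref{fig:bourgain}, invoke Theorem \ref{thm:privacy} for privacy (valid for every realization of the random sets $S_{ij}$ because the embedding ignores the data points), invoke Theorem \ref{thm:accuracy} together with the union-bounded Theorem \ref{thm:bourgain} for accuracy, and use the $\min(1,\|\pi(x)-\pi(y)\|_1)$ clipping to decouple the accuracy failure event from the privacy guarantee so that a pure-$\eps$ guarantee survives. The one cosmetic difference --- you route through the offline $\ell_1$ mechanism of Theorem \ref{thm:l1-offline} where the paper's Theorem \ref{thm:accuracy} is phrased via the interactive bounds of Theorem \ref{thm:l1-interactive} --- is immaterial, since the $\tilde{O}$ bounds have identical exponents and both choices are valid once $\query$ is known in advance.
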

\begin{remark}
Note that in this theorem, we require a dependence on $k$ both in the running time and in the accuracy bounds. This is because the embedding itself is a function of all of the queries in the query class. This is also what requires us to restrict attention to the non-interactive setting.
\end{remark}

\section{Conclusions}

We have shown that \emph{distance queries} defined over an arbitrary metric can be privately answered using efficient algorithms, circumventing known hardness results for less structured classes of linear queries. Our techniques crucially leveraged the metric structure of the queries, through our reliance on metric embeddings. Identifying other kinds of query structure that can be used to design efficient private query release algorithms remains one of the most important directions in differential privacy.

Towards this goal, we make a concrete conjecture. Let $\metric = [0,1]^\ell$ be the $\ell$-dimensional unit rectangle endowed with the Euclidean norm, and let $S \subseteq \{\phi:[0,1]^\ell\rightarrow [0,1]\}$ be the collection of predicates such that for each $\phi \in S$:
\begin{enumerate}
  \setlength{\partopsep}{0pt}
  \setlength{\topsep}{0pt}
  \setlength{\parsep}{0pt}
  \setlength{\itemsep}{0pt}
  \item $\phi$ is 1-Lipschitz: for all $x,y \in [0,1]^{\ell}$,  $|\phi(x) - \phi(y)| \leq ||x-y||_2$
  \item $\phi$ is convex: for all $x,y \in [0,1]^{\ell}$ and for all $t \in [0,1]$, $\phi(tx + (1-t)y) \leq t\phi(x) + (1-t)\phi(y)$
\end{enumerate}
For each $\phi \in S$, define the query $f_\phi(D) = \frac{1}{n}\sum_{x \in D}\phi(x)$. Then:
\begin{conjecture}
Let $C = \{f_\phi : \phi \in S\}$ denote the set of 1-Lipschitz, convex linear queries defined over the universe $\metric = [0,1]^\ell$. There is a differentially private query release mechanism operating in the interactive setting, that can answer any subset of $k$ queries from $C$ to additive error $\tilde{O}(\mathrm{poly}(\ell,\log(k))/\sqrt{n})$ with per-query update time $\textrm{poly}(\ell,n)$.
\end{conjecture}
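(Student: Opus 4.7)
My plan is to realize the conjecture through the iterative database construction (IDC) framework of Lemma~\ref{lem:idc}: it suffices to give an online learner for $F_D(\phi) = \frac{1}{n}\sum_{x\in D}\phi(x)$, viewed now as a function on the space $S$ of $1$-Lipschitz convex predicates, with mistake bound $m(\alpha) = O(\mathrm{poly}(\ell)/\alpha^2)$ using $\mathrm{poly}(\ell)$ auxiliary $O(1/n)$-sensitive queries per update. Plugging this into Lemma~\ref{lem:idc} yields the target error $\tilde{O}(\mathrm{poly}(\ell,\log k)/\sqrt{n})$ with per-round running time $\mathrm{poly}(\ell,n)$, as a direct calculation analogous to the proof of Theorem~\ref{thm:l1-interactive} shows.

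The natural learner is a continuous multiplicative-weights algorithm over the data universe $[0,1]^{\ell}$. I would maintain a hypothesis density $\hat{\mu}_t$ on $[0,1]^{\ell}$, initialized to uniform, and predict $\hat{F}_t(\phi) = \mathbb{E}_{x \sim \hat{\mu}_t}[\phi(x)]$ via Monte Carlo samples from $\hat{\mu}_t$. When an overestimate mistake $\hat{F}_t(\phi_t) > F_D(\phi_t) + \alpha$ is detected, perform the MW update $\hat{\mu}_{t+1}(x) \propto \hat{\mu}_t(x)\exp(-\eta \phi_t(x))$. Because $\phi_t$ is convex, $\exp(-\eta \phi_t)$ is log-concave, so $\hat{\mu}_{t+1}$ remains log-concave, and both sampling and approximate integration against $\phi_t$ can be carried out in $\mathrm{poly}(\ell,n)$ time via a standard log-concave sampler (e.g.\ Lov\'asz--Vempala). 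Smoothing $\mu_D$ at scale $\alpha/\ell$---which only changes query values by $O(\alpha)$ by the $1$-Lipschitz assumption---makes the reference distribution absolutely continuous against uniform, after which the standard KL-potential argument bounds the number of overestimate mistakes by $O(\ell\log(\ell/\alpha)/\alpha^2)$.

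The main obstacle is the underestimate direction. When $\hat{F}_t(\phi_t) < F_D(\phi_t) - \alpha$, the symmetric MW update multiplies $\hat{\mu}_t$ by $\exp(\eta \phi_t)$, which is log-\emph{convex} when $\phi_t$ is convex, and is therefore not known to admit a polynomial-time sampler. My strategy would be to replace this step with a Frank--Wolfe-type mixture update: use a $\mathrm{poly}(\ell)$-size set of auxiliary queries to extract function-value and subgradient approximations of $\phi_t$ at points drawn from $\hat{\mu}_t$ (each $O(1/n)$-sensitive in $D$), certify a point $x^\star \in [0,1]^{\ell}$ with $\phi_t(x^\star) \ge F_D(\phi_t)$, and then set $\hat{\mu}_{t+1} = (1-\lambda)\hat{\mu}_t + \lambda \nu_{x^\star}$, where $\nu_{x^\star}$ is a log-concave bump concentrated near $x^\star$. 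The crux, and where I expect the bulk of the difficulty to lie, is producing such an $x^\star$ efficiently: global convex maximization over $[0,1]^{\ell}$ is NP-hard, but we do not need the global maximum, only a point beating the certified mean by $\Omega(\alpha)$. My hope is to exploit convexity of $\phi_t$---so that the subgradient at a typical sample from $\hat{\mu}_t$ points toward a direction of increase---and ascend to a vertex of $[0,1]^{\ell}$ by line searches, using the fact that $\mathbb{E}_{\mu_D}[\phi_t] > \hat{F}_t(\phi_t) + \alpha$ forces some data point to achieve such a value. A unified potential combining the KL divergence to the smoothed $\mu_D$ with the mass displaced in Frank--Wolfe steps would then need to decrease by $\Omega(\alpha^2)$ at every mistake, yielding the claimed bound; constructing this potential and verifying the decrease in both directions is the key technical piece left open.
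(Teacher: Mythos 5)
The statement you are attempting is labeled a \emph{conjecture} in the paper: the authors explicitly propose it as an open problem in the Conclusions section and give no proof. So there is no argument in the paper to compare against; what must be assessed is whether your proposal actually closes the open question, and it does not.

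Your framing is sensible and the overestimate direction is plausible: the IDC/Lemma~\ref{lem:idc} route is the right reduction, the plugging-in arithmetic is consistent ($|S| = \mathrm{poly}(\ell)$ and $m(\alpha) = \mathrm{poly}(\ell)/\alpha^2$ does give $\alpha = \tilde{O}(\mathrm{poly}(\ell,\log k)/\sqrt{n})$), and the observation that $\exp(-\eta\phi_t)$ is log-concave when $\phi_t$ is convex, so MW \emph{downweighting} keeps the hypothesis log-concave and samplable, is a genuinely nice point. But the underestimate direction, which you correctly flag, is not a loose end to tidy up; it is the entire difficulty, and your proposed workaround has a hole of the same size as the original problem. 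Replacing the log-convex MW step by a Frank--Wolfe mixture update requires, at every underestimate mistake, a point $x^\star \in [0,1]^\ell$ with $\phi_t(x^\star)$ substantially exceeding the current hypothesis mean. Since $\phi_t$ is convex and public, producing $x^\star$ is a (public) convex \emph{maximization} over the box, which is NP-hard in general, and the subgradient-ascent heuristic you describe has no guarantee: ascending from a sample of $\hat{\mu}_t$ merely reaches \emph{some} vertex, not one whose value beats $F_D(\phi_t)$, and the gap $F_D(\phi_t) > \hat{F}_t(\phi_t) + \alpha$ tells you such a point exists but gives no handle on finding it without looking at the private data. Even granting an oracle for $x^\star$, you do not specify the combined potential: a single KL divergence against a smoothed $\mu_D$ does not obviously decrease under a convex-combination update $\hat{\mu}_{t+1} = (1-\lambda)\hat{\mu}_t + \lambda\nu_{x^\star}$, and you give no candidate functional that provably drops by $\Omega(\alpha^2)$ under both update types while remaining uniformly bounded. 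Finally, the auxiliary queries in the IDC definition must be a fixed family of $O(1/n)$-sensitive database functionals chosen in advance; ``subgradients of $\phi_t$ at points drawn from $\hat{\mu}_t$'' are public functions of the query and the hypothesis, not database functionals, so it is unclear what private information these queries are actually extracting or how they would help certify the needed $x^\star$. In short, the conjecture remains open and your sketch, while a reasonable first cut, leaves the central obstruction (the underestimate/maximization step and its potential argument) unresolved.
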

Note that distance queries are a subset of convex, Lipschitz queries. Showing efficient algorithms for this entire set of queries would be an important step forwards towards the agenda of understanding the limitations of polynomial time private query release. We remark that if we remove the Lipschitz condition (and consider instead the class of all \emph{convex} queries), then this class includes \emph{boolean conjunctions}, which is already a challenge problem for efficient private query release. With the Lipschitz condition, this question is disjoint from (and possibly easier than) the question of efficiently releasing conjunctions.

\bibliographystyle{alpha}
\bibliography{privacy}



\end{document}